\title{Multivariate Submodular Optimization}
\author{Richard Santiago}{School of Computer Science, McGill University, Montreal, Canada}{richard.santiagotorres@mail.mcgill.ca}{}{}
\author{F. Bruce Shepherd}{Department of Computer Science, University of British Columbia, Vancouver, Canada}{fbrucesh@cs.ubc.ca}{}{}
\authorrunning{R. Santiago and F. B. Shepherd}
\subjclass{\ccsdesc[300]{Theory of computation~Submodular optimization and polymatroids}}
\keywords{submodular optimization, machine learning, multi-agent, multivariate, approximation algorithms}
\newtheorem*{rep@theorem}{\rep@title}
\newcommand{\newreptheorem}[2]{%
	\newenvironment{rep#1}[1]{%
		\def\rep@title{#2 \ref{##1}}%
		\begin{rep@theorem}}%
		{\end{rep@theorem}}}
\theoremstyle{plain}
\newtheorem{claim}[theorem]{Claim}
\newtheorem{proposition}[theorem]{Proposition}
\newcommand{\R}{\mathbb{R}}
\newcommand{\E}{\mathbb{E}}
\newcommand{\Z}{\mathbb{Z}}
\newcommand{\F}{\mathcal{F}}
\renewcommand{\P}{\mathcal{P}}
\newcommand{\B}{\mathcal{B}}
\renewcommand{\H}{\mathcal{H}}
\renewcommand{\L}{\mathcal{L}}
\newcommand{\T}{\mathcal{T}}
\newcommand{\MA}{multi-agent}
\renewcommand{\S}{(S_1,\ldots,S_k)}
\renewcommand{\T}{(T_1,\ldots,T_k)}
\DeclareMathOperator*{\argmin}{argmin}
\DeclareMathOperator*{\argmax}{argmax}
\newcommand{\notni}{\not\ni}
\begin{document}
	
\maketitle

\begin{abstract}
	Submodular functions have found a wealth of new applications in data science and machine learning models in recent years.
	This has been coupled with many algorithmic advances in the area of submodular optimization: (SO) $\min/\max~f(S): S \in \mathcal{F}$, where $\mathcal{F}$ is a given family of feasible sets over a ground set $V$ and $f:2^V \rightarrow \mathbb{R}$ is submodular.
	In this work we focus on a more general class of \emph{multivariate submodular optimization} (MVSO) problems:  $\min/\max~f (S_1,S_2,\ldots,S_k):  S_1 \uplus S_2 \uplus \cdots \uplus S_k \in \mathcal{F}$. Here we use $\uplus$ to denote disjoint union and hence this model is attractive where resources are being allocated across  $k$ agents, who share a ``joint'' multivariate nonnegative objective $f(S_1,S_2,\ldots,S_k)$ that captures some type of submodularity (i.e. diminishing returns) property.
	We provide some explicit examples and potential applications for this new framework.
	
	For maximization, we show that practical algorithms such as accelerated greedy variants and distributed algorithms achieve good approximation guarantees for very general families (such as matroids and $p$-systems). For arbitrary families, we show that monotone (resp. nonmonotone) MVSO admits an $\alpha (1-1/e)$ (resp. $\alpha \cdot 0.385$) approximation whenever monotone (resp. nonmonotone) SO admits an $\alpha$-approximation over the multilinear formulation. This substantially expands the family of tractable models for submodular maximization.
	For minimization, we show that if SO admits a $\beta$-approximation over \emph{modular} functions, then MVSO admits a $\frac{\beta \cdot n}{1+(n-1)(1-c)}$-approximation where $c\in [0,1]$ denotes the curvature of $f$. We show that this approximation is essentially tight even for $\mathcal{F}=\{V\}$. Finally, we give a bound in terms of $k$ and prove that MVSO has an $\alpha k$-approximation whenever SO admits an $\alpha$-approximation over the convex formulation.
\end{abstract}

\newpage

\section{Introduction}
\label{sec:intro}

Submodularity is a property of set functions with deep theoretical
consequences and a wide range of applications. Optimizing submodular
functions is a central subject in operations research and combinatorial
optimization \cite{lovasz1983submodular}. It appears in many important
optimization frameworks including cuts in graphs, set covering problems, plant location problems, certain
satisfiability problems, combinatorial auctions, and maximum entropy
sampling. In machine learning it has recently been identified and
utilized in domains such as viral marketing \cite{kempe2003maximizing},
information gathering \cite{krause2007near}, image segmentation \cite{boykov2001interactive,kohli2009p3,jegelka2011submodularity},
document summarization \cite{lin2011class}, and speeding up satisfiability
solvers \cite{streeter2009online}.

A set function $f:2^V \to \R$ is \emph{submodular} if $f(S) + f(T) \geq f(S \cup T) + f(S \cap T)$ for any $S,T \subseteq V$.
We say that $f$ is \emph{monotone} if $f(S) \leq f(T)$ for $S \subseteq T$.
Throughout, all submodular functions are nonnegative, and we usually assume $f(\emptyset)=0$. Our functions are  given by a \emph{value oracle}, where for a given set $S$ an algorithm can query the oracle to find its value $f(S)$.

We consider the following broad class of submodular optimization (SO) problems:
\begin{equation}
\label{eqn:SA}
\mbox{SO($\F$) ~~~~Min~ / ~Max ~$f(S):S\in\F$}
\end{equation}
where $f$ is a nonnegative submodular set function on a finite ground set $V$,
and $\F \subseteq 2^V$ is a family of feasible sets.  
These problems have been well studied for a variety of set families $\F$.
We explore  the connections between these (single-agent) problems and their more general multivariate
incarnations.
In the {\em multivariate (MV)} version, we have $k$ agents and a ``joint'' multivariate
nonnegative objective $f(S_1,S_2,\ldots,S_k)$ that captures some type of submodularity (i.e.
diminishing returns) property
(see Section \ref{sec:mv}). 
As before, we are looking for  sets $S\in\F$, however,
we now have a 2-phase task: the elements of $S$ must also be partitioned amongst the agents.
Hence we have set variables $S_{i}$ and seek to optimize $f(S_1,S_2,\ldots,S_k)$.
This leads to the multivariate submodular optimization (MVSO) versions:
\begin{equation}
\label{eqn:MV}
\mbox{MVSO($\F$) ~~~~Min~ / ~Max ~$f(S_1,S_2,\ldots,S_k):S_{1}\uplus S_{2}\uplus \cdots\uplus S_{k}\in\F$.}
\end{equation}

Our main objective is to study the approximability of the multivariate problems in terms of their single-agent versions.
We refer to the {\em multivariate (MV) gap} as the approximation factor loss incurred
by moving to the multivariate setting. 
To the best of our knowledge, neither the MVSO($\F$) framework for general families $\F$ nor the notion of MV gap have been considered before in the literature.

An important special case of MVSO occurs when the function $f \S$ can be
\emph{separated} as $f \S = \sum_{i\in [k]} f_i (S_i)$ where the $f_i$ are all submodular;
in this case we say that $f$ is \emph{separable}. This leads to the
class of multi-agent submodular optimization (MASO) problems
\begin{equation}
\label{eqn:MA}
\mbox{MASO($\F$) ~~~~Min~ / ~Max ~$\sum_{i=1}^{k} f_{i}(S_{i}):S_{1}\uplus S_{2}\uplus \cdots\uplus S_{k}\in\F$,}
\end{equation}
which have been widely studied (see related work section).

\subsection{Multivariate submodular optimization}
\label{sec:mv}

We consider functions of several variables which satisfy the following type of submodularity property. A multivariate function $f:2^{kV}\to\R$ is \emph{$k$-multi-submodular}
if for all pairs of tuples $(S_{1},S_2,...,S_{k}),(T_{1},T_2,...,T_{k})\in2^{kV}$
we have
\[
f(S_{1},...,S_{k})+f(T_{1},...,T_{k})\geq f(S_{1}\cup T_{1},S_{2}\cup T_{2},..,S_{k}\cup T_{k})+f(S_{1}\cap T_{1},S_2 \cap T_2,...,S_{k}\cap T_{k}).
\]
Moreover, we say that $f$ is \emph{normalized} if $f(\emptyset,\emptyset,\ldots,\emptyset)=0$, and
\emph{monotone} if $f \S \leq f \T$ for all tuples $\S$ and $\T$ satisfying $S_i \subseteq T_i$ for all $i \in [k]$.

In the special case of $k=1$ a $k$-multi-submodular function is just a submodular function. 
In Appendix \ref{sec:properties-mv-functions} we discuss how $k$-multi-submodular
functions can also be naturally characterized (or defined) in terms of diminishing returns.
This notion of multivariate submodularity has been considered before (\cite{fisher1978analysis,singh2012bisubmodular}) and we discuss this in detail on Section \ref{sec:related-work}.

Two explicit examples of (non-separable) $k$-multi-submodular functions (see Appendix \ref{sec:examples-mv-functions} for proofs)
are the following.
\begin{example}
	Consider a multilinear function $h:\Z^k_+ \to \R$ given by $h(z)=\sum_{S \subseteq [k]} a_S \prod_{m \in S} z_m$. Let $f:2^{kV} \to \R$ be a multivariate set function defined as $f(S_1,\ldots,S_k)=h(|S_1|,\ldots,|S_k|)$. Then $f$ is $k$-multi-submodular if and only if $a_S \leq 0$ for all $S \subseteq [k]$.
\end{example}

\begin{example}
	Let $h:\Z^k_+ \to \R$ be a quadratic function given by $h(z)=z^T A z$. Let $f:2^{kV} \to \R$ be a multivariate set function defined as $f(S_1,\ldots,S_k)=h(|S_1|,\ldots,|S_k|)$. Then $f$ is $k$-multi-submodular if and only if $A=(a_{ij})$ satisfies $a_{ij} + a_{ji} \leq 0$ for all $i,j \in [k]$.
\end{example}

We believe the above examples are useful for modelling ``competition'' between agents in many domains.
In Section \ref{sec:applications} we discuss one application to  sensor placement problems.

\subsection{Our contributions}

Our first contribution is to show that the MV framework can model much more general problems than the separable multi-agent (i.e. MASO) framework. This is quantitatively captured in the following information theoretic result (see Section~\ref{sec:hardness-monot}) where  we  establish a large gap between the two problems:
\[
{\bf (MV-Min)}
\hspace{0.2cm}
\begin{array}{cc}
\min & f(S_{1},\ldots,S_{k})\\
\mbox{s.t.} & S_1 \uplus \cdots \uplus S_k=V
\end{array}
\hspace*{0.6cm}
{\bf (MA-Min)}
\hspace{0.2cm}
\begin{array}{cc}
\min & \sum_{i=1}^k f_{i}(S_{i})\\
\mbox{s.t.} & S_1 \uplus \cdots \uplus S_k=V
\end{array}
\]

\begin{theorem}
	\label{thm:MV-monot-hardness}
	The MV-Min problem with a nonnegative monotone $k$-multi-submodular objective function cannot be approximated to a ratio $o(n/\log n)$ in the value oracle model with polynomial number of queries, whereas its separable version MA-Min has a tight $O(\log n)$-approximation polytime algorithm for nonnegative monotone submodular functions $f_i$.
\end{theorem}

The above result shows that the MV model may also potentially face roadblocks in terms of tractability. 
Fortunately, we can show that the multivariate problem remains very well-behaved in the maximization setting.
Our main result  establishes that if the single-agent problem for a family $\F$ admits approximation via its multilinear relaxation (see Section \ref{sec:max-SA-MA-formulations}),
then we may extend this to its multivariate version with a constant factor loss.

\begin{theorem}
	\label{thm:max-MA-gap}
	If there is a (polytime) $\alpha(n)$-approximation for monotone SO($\F$) maximization
	via its multilinear relaxation, then there is a (polytime) $(1-1/e) \cdot \alpha(n)$-approximation
	for monotone MVSO($\F$) maximization. Furthermore, given a downwards closed family $\F$,
	if there is a (polytime) $\alpha(n)$-approximation
	for nonmonotone SO($\F$) maximization
	via its multilinear relaxation, then there is a (polytime) $0.385 \cdot \alpha(n)$-approximation
	for nonmonotone MVSO($\F$) maximization.
\end{theorem}

We note that the multilinear relaxation can be
efficiently evaluated for a large class of practical and useful submodular functions \cite{iyer2014monotone}, thus making these algorithms viable for many real-world machine learning problems.

We remark that the MV gap of $1-1/e$ for monotone objectives is tight, in the sense that there are families where this cannot be improved.
For instance, $\F=\{V\}$ has a trivial $1$-approximation for the single-agent problem, and a $1-1/e$ inapproximability factor for the separable multi-agent (i.e. MASO) version \cite{khot2005inapproximability,mirrokni2008tight}, and hence also for the more general MVSO problem.

An immediate application of Theorem \ref{thm:max-MA-gap} is that it provides the first constant (and in fact optimal) ($1-1/e$)-approximation for the monotone \emph{generalized submodular welfare} problem $\max f(S_1,S_2,\ldots,S_k): S_1 \uplus \cdots \uplus S_k=V$. This problem generalizes the well-studied submodular welfare problem \cite{lehmann2001combinatorial,vondrak2008optimal,korula2018online}, which captures several allocation problems and has important applications in combinatorial auctions, Internet advertising, and network routing.
The MV objectives can capture much more general interactions among the agents/bidders, where now a bidder's valuation does not only depend
on the set $S$ of items that she gets, but also on the items that her strategic partners and competitors get. 
For instance, in a bandwidth spectrum auction, this could capture a company's interest to maximize compatibility and prevent cross-border interference.

In Section \ref{sec:MASA} we describe a simple reduction that
shows that for some families
\footnote{A family of sets $\F$ is a \emph{$p$-system} if for all $S \in \F$ and $v \in V$ there exists a set $T \subseteq S$
	such that $|T| \leq p$ and $S \setminus T \cup \{v\} \in \F$. A \emph{matroid} is a $1$-system.
	Cardinality and partition constraints are examples of matroids.
	We refer the reader to \cite{schrijver2003combinatorial,calinescu2007maximizing,calinescu2011maximizing} for a comprehensive discussion.} 
an (optimal) MV gap of $1$ holds. We also discuss how for those families, practical algorithms 
(such as accelerated greedy variants and distributed algorithms) can be used and lead to good approximation guarantees.

\begin{theorem}
	\label{thm:max-invariance}
	Let $\F$ be a matroid, a $p$-matroid intersection, or a $p$-system. Then, if there is a (polytime) $\alpha$-approximation algorithm for monotone (resp. nonmonotone) SO($\F$) maximization,
	there is a (polytime) $\alpha$-approximation algorithm for monotone (resp. nonmonotone) MVSO($\F$) maximization.
\end{theorem}

On the minimization side our approximation results and MV gaps are larger.
This is somewhat expected due to the strong hardness results already existing for single-agent submodular minimization (see Section \ref{sec:related-work}). 
However, we give essentially tight approximations in terms of the objective's curvature. The notion of curvature has been widely used for univariate functions \cite{conforti1984submodular,vondrak2010curvature,iyer2013curvature,bai2018greed}, since it allows for better approximations and it is linear time computable.

Given a tuple $(S_1,\ldots,S_k) \in 2^{kV}$ and $(i,v) \in [k]\times V$, we denote by $(S_1,\ldots,S_k)+(i,v)$ the new tuple $(S_1,\ldots,S_{i-1},S_i + v, S_{i+1},\ldots,S_k)$.
Then, it is natural to think of the quantity
\begin{equation*}
f_{\S}((i,v)) := f(\S + (i,v)) - f \S
\end{equation*}
as the marginal gain of assigning element $v$ to agent $i$ in the tuple $(S_1,\ldots,S_k)$.
We also use $f((i,v))$ to denote the quantity $f(\emptyset,\ldots,\emptyset,v,\emptyset,\ldots,\emptyset)$ where $v$ appears in the $ith$ component.
Then given a normalized monotone $k$-multi-submodular function $f:2^{kV}\to \R$ we define its \emph{total curvature} $c$ and its \emph{curvature $c(S_1,\ldots,S_k)$ with respect to a tuple} $(S_1,\ldots,,S_k) \subseteq V^k$ as
\begin{equation*}
c = 1 - \min_{i\in [k], v\in V} \frac{f_{(V,V,\ldots,V)-(i,v)}((i,v))}{f((i,v))} \hspace*{0.3cm}, \hspace*{0.3cm} c \S = 1 - \min_{ i\in [k], v \in S_i} \frac{f_{\S - (i,v)}((i,v))}{f((i,v))}.
\end{equation*}

We prove the following curvature dependent result for $k$-multi-submodular objectives.
We note the gap is stronger in the sense that it is relative to the single-agent {\em modular} problem.
\footnote{A function $f:2^{V}\to\R$ is modular if
	$f(A)+f(B)=f(A\cup B)+f(A\cap B)$ for all $A,B\subseteq V$. Modular
	functions can always be expressed in the form $f(S)=\sum_{v\in S}w(v)$
	for some weight function $w:V\to\R$.}
\begin{theorem}
	\label{thm:curvature}
	Let $f$ be a monotone $k$-multi-submodular function, and let $\F$ be a family that admits a (polytime) $\beta$-approximation over modular functions. Denote by $(S^*_1,\ldots,S^*_k)$ an optimal solution to monotone MVSO($\F$) minimization, and by $c(S^*_1,\ldots,S^*_k)$ the curvature of $f$ with respect to $(S^*_1,\ldots,S^*_k)$. Then there is a (polytime) $\frac{\beta \sum_{i \in [k]} |S^*_i|}{1+(\sum_{i \in [k]} |S^*_i|-1)(1-c(S^*_1,\ldots,S^*_k))}$-approximation algorithm for monotone MVSO($\F$) minimization.
\end{theorem}

In some situations the above result leads to approximation factors
highly preferable to those obtained for general functions, given the strong polynomial hardness that most of these problems present for objectives with curvature $1$. Examples of such situations include families
like $\F=\{V\}$, spanning trees, or perfect matchings, where exact algorithms are available for modular objectives (i.e. $\beta =1$
in those cases) and any optimal solution $(S^*_1,\ldots,S^*_k)$ satisfies $\sum_{i \in [k]} |S^*_i| = \Omega(n)$.
Thus, we go from polynomial approximation factors (for objectives with curvature $1$) to constant or logarithmic factors (for constant or order $1-\frac{1}{\log n}$ curvature).

Moreover, having the curvature $c(S^*_1,\ldots,S^*_k)$ can be much more beneficial than having the total curvature $c$.
For instance, for the problem
$ \min f(S_1,\ldots,S_k): S_1 \uplus \cdots \uplus S_k = V $
with $f(S_1,\ldots,S_k) = \min\{n, \sum_{i=1}^k |S_i|\}$. Here the total curvature of $f$ is $1$ (hence leading to an $n$-approximation in Theorem \ref{thm:curvature}), while the curvature $c(S^*_1,\ldots,S^*_k)$
with respect to any partition $(S^*_1,\ldots,S^*_k)$ is $0$ (and thus leading to an exact approximation via Theorem \ref{thm:curvature}).

In Section \ref{sec:hardness-monot} we give evidence that Theorem \ref{thm:curvature} is essentially tight, even for $\F=\{V\}$ where we show the following curvature dependent information-theoretic lower bound.

\begin{theorem}
	\label{thm:MV-monot-hardness-curvature}
	The monotone MVSO($\F$) minimization problem over $\F=\{V\}$ and objectives $f$ with total curvature $c$
	cannot be approximated to a ratio $o(\frac{n/\log n}{1+(\frac{n}{\log n}-1)(1-c)})$ in the value oracle model with polynomial number of queries.
\end{theorem}

Finally, we give an approximation in terms of the number of agents $k$, which may become preferable in settings
where $k$ is not too large.
\begin{theorem}
	\label{thm:min-k-gap}
	Suppose there is a (polytime) $\alpha(n)$-approximation for monotone SO($\F$) minimization
	based on rounding the convex relaxation. Then there is a (polytime) $k \alpha(n)$-approximation
	for monotone MVSO($\F$) minimization.
\end{theorem}


\subsection{The multivariate model and applications}
\label{sec:applications}

Our second objective is to extend the multivariate model and show that in some cases
this larger class remains tractable. Specifically,
we define the
{\em capacitated multivariate submodular optimization (CMVSO) problem} as follows:
\begin{equation}
\label{mv}
\mbox{CMVSO$(\F)$}~~~~~~~
\begin{array}{rc}
\max / \min & f(S_1,S_2,\ldots,S_k) \\
\mbox{s.t.} & S_{1}\uplus \cdots\uplus S_{k}\in\F\\
& S_{i}\in\F_{i}\,,\,\forall i\in[k]
\end{array}
\end{equation}
where we are supplied with subfamilies $\mathcal{F}_i$.

Our results imply that one maintains good approximations even while adding interesting side  constraints.
For example, for a monotone maximization instance of CMVSO
where $\F$ is a $p$-matroid intersection and the $\F_i$ are all matroids, our results from Section \ref{sec:MASA}
lead to a $(\frac{1}{p+1} - \epsilon)$-approximation algorithm via the multilinear relaxation, or a $1/(p+2)$-approximation via a simple greedy algorithm.
We believe that these, combined with other results from Section \ref{sec:MASA}, substantially expand the family of tractable models (both in theory and practice) for maximization.

Many existing applications  fit into the CMVSO framework and  some of these can be enriched through the added flexibility of the  capacitated model.
For instance, one may include
set bounds on the variables: $L_{i}\subseteq S_{i}\subseteq U_{i}$
for each $i$, or simple cardinality constraints: $|S_{i}| \leq b_{i}$ for each $i$.
A well-studied (\cite{fleischer2006sap,goundan2007revisiting,calinescu2011maximizing}) application of CMVSO in the maximization setting is the Separable Assignment Problem (SAP), which corresponds to the setting where
the objective is separable and modular, the $\F_i$ are downward closed (i.e. hereditary) families, and $\F=2^V$.
The following example illustrates CMVSO's potential as a general model.

\begin{example}[Sensor Placement with Multivariate Objectives]
	The problem of placing sensors and information gathering has been popular in the submodularity literature \cite{krause2007near,krause1973optimal,krause08efficient}.
	We are given a set of sensors $V$ and a set of possible locations $\{1,2,\ldots,k\}$ where the sensors can be placed.
	There is also a budget constraint restricting the total number of sensors that can be deployed.
	The goal is to place sensors at some of the locations so as to maximize the ``informativeness'' gathered.
	This application is well suited to a $k$-multi-submodular objective function $f(S_{1},...,S_{k})$ which
	measures the ``informativeness'' of placing sensors $S_i$ at location $i$. A natural mathematical formulation for this is given by
	\[
	\begin{array}{cc}
	\max & f(S_{1},S_2,...,S_{k})\\
	\mbox{s.t.} & S_1 \uplus S_2 \uplus \cdots \uplus S_k \in \F\\
	& S_{i} \in \F_i,
	\end{array}
	\]
	where $\F:=\{ S \subseteq V: |S| \leq b \}$ imposes the budget constraint and $\F_i$ gives additional modelling flexibility. 
	For instance, we could impose $\F_i=\{ S \subseteq V_i: |S| \leq b_i \}$ to constrain
	the types and number of sensors that can be placed at location $i$.
	Notice that in  these cases both $\F$ and the $\F_i$ are matroids and hence the algorithms from Section \ref{sec:reduction-properties} apply.
	One may form a multivariate objective by defining $f(S_1,S_2, \ldots ,S_n)= \sum_i f_i(S_i) - R(S_1,S_2, \ldots ,S_n)$
	where the $f_i$'s measure the benefit of placing sensors $S_i$ at location $i$, and $R()$ is a redundancy function. If the $f_i$'s are submodular
	and $R()$ is k-multi-supermodular, then $f$ is $k$-multi-submodular.
	In this setting, it is natural to take the $f_i$'s to be coverage functions (i.e. $f_i(S_i)$ measures the coverage of placing sensors $S_i$ at location $i$).
	We next propose a family of   ``redundancy'' functions which are k-multi-supermodular.
	
	{\sc Supermodular penalty measures via Quadratic functions.}
	We denote  ${\bf S}:=(S_1,S_2, \ldots ,S_n)$ and
	define $z_{{\bf S}} := (|S_1|,|S_2|, \ldots ,|S_n|)$.
	One can  show (see Lemma~\ref{lem:MIMO} in Appendix \ref{sec:examples-mv-functions}) that if $A$ is a matrix satisfying $a_{ij}+a_{ji}\geq 0$, then $R({\bf S}):=z_{\bf S}^T A z_{\bf S}$ is k-multi-supermodular. Then for this particular example one could for instance take redundancy coefficients $a_{ij}$ as $\Theta(\frac{1}{d(i,j)^2})$ where $d(i,j)$ denotes the distance between locations $i$ and $j$.  This can be further extended  so that different sensor types
	contribute different weights to the vector $z_{{\bf S}}$, e.g., define $z_{{\bf S}}(i) = \sum_{j \in S_i} w(j)$ for an associated sensor weight vector $w$.
\end{example}

\subsection{Related work}
\label{sec:related-work}

Submodularity naturally arises in many machine learning applications such as 
viral marketing \cite{kempe2003maximizing},
information gathering \cite{krause2007near}, image segmentation \cite{boykov2001interactive,kohli2009p3,jegelka2011submodularity},
document summarization \cite{lin2011class}, news article recommendation \cite{el2009turning}, active learning \cite{golovin2011adaptive}, and speeding up satisfiability
solvers \cite{streeter2009online}.

{\bf Single Agent Optimization.} The high level view of the tractability status for unconstrained (i.e., $\F=2^V$)
submodular optimization is that   both maximization and minimization generally behave well. Minimizing a submodular
set function is a classical combinatorial optimization problem which
can be solved in polytime \cite{grotschel2012geometric,schrijver2000combinatorial,iwata2001combinatorial}.
Unconstrained maximization, on the
other hand, is known to be inapproximable for general submodular set functions
but admits a polytime constant-factor approximation algorithm
when $f$ is nonnegative \cite{buchbinder2015tight,feige2011maximizing}.

In the constrained maximization setting, the classical work  \cite{nemhauser1978analysis,nemhauser1978best,fisher1978analysis} already
established an optimal $(1-1/e)$-approximation
factor for maximizing a nonnegative monotone submodular function subject to a
cardinality constraint, and a $(1/(k+1))$-approximation for maximizing
a nonnegative monotone submodular function subject to $k$ matroid constraints. This approximation is almost tight in the sense that there is an (almost matching)
factor $\Omega(\log(k)/k)$ inapproximability result \cite{hazan2006complexity}.
For nonnegative monotone
functions, \cite{vondrak2008optimal,calinescu2011maximizing} give
an optimal $(1-1/e)$-approximation based on multilinear extensions when $\F$ is a matroid;
\cite{kulik2009maximizing} provides a
$(1-1/e-\epsilon)$-approximation when $\F$ is given by a constant number
of knapsack constraints, and \cite{lee2010submodular}
gives a local-search algorithm that achieves a $(1/k-\epsilon)$-approximation
(for any fixed $\epsilon>0$) when $\F$ is a $k$-matroid intersection. For nonnegative nonmonotone functions, a $0.385$-approximation is the best factor known \cite{buchbinder2016constrained} for maximization under a matroid constraint, in \cite{lee2009non} a $1/(k+O(1))$-approximation is given for $k$ matroid constraints with $k$ fixed. A simple ``multi-greedy'' algorithm \cite{gupta2010constrained} matches the approximation of Lee et al. but is polytime for any $k$. Vondrak \cite{vondrak2013symmetry} gives a $\frac{1}{2}(1-\frac{1}{\nu})$-approximation
under a matroid base constraint where $\nu$ denotes the fractional
base packing number. Finally, Chekuri et al \cite{vondrak2011submodular} introduce a general framework based on relaxation-and-rounding that allows for combining different types of constraints. This leads, for instance, to $0.38/k$ and $0.19/k$ approximations for maximizing nonnegative submodular monotone and nonmonotone functions respectively under the combination of $k$ matroids and $\ell=O(1)$ knapsacks constraints.

For constrained minimization, the news is worse
\cite{goel2009approximability,svitkina2011submodular,iwata2009submodular}.
If $\F$ consists of spanning trees (bases of
a graphic matroid) Goel et al \cite{goel2009approximability} show a lower bound
of $\Omega(n)$, while in the case where $\F$ corresponds
to the cardinality constraint $\{S:|S|\geq k\}$ Svitkina and Fleischer
\cite{svitkina2011submodular} show a lower bound  of $\tilde{\Omega}(\sqrt{n})$.
There are a few exceptions. The problem can be solved exactly when $\F$ is a ring family (\cite{schrijver2000combinatorial}), triple family (\cite{grotschel2012geometric}), or parity family (\cite{goemans1995minimizing}).
In the context of NP-Hard problems, there are almost no cases where good (say $O(1)$ or $O(\log n)$) approximations exist.
We have that the submodular vertex cover admits
a $2$-approximation (\cite{goel2009approximability,iwata2009submodular}), and the $k$-uniform hitting
set has $O(k)$-approximation.

{\bf Multivariate Problems.}
The notion of $k$-multi-submodularity already appeared (under the name of multidimensional submodularity) in the classical work of Fisher et al \cite{fisher1978analysis}, where they consider the multivariate monotone maximization problem with $\F=\{V\}$ as a motivating example for submodular maximization subject to a matroid constraint. They show that for this problem a simple greedy algorithm achieves a $1/2$-approximation. The work of Singh et al \cite{singh2012bisubmodular} considers the special case of $2$-multi-submodular functions (they call them \emph{simple bisubmodular}). They give constant factor approximations for maximizing monotone $2$-multi-submodular functions under cardinality and partition constraints, and provide applications to coupled sensor placement and coupled feature selection problems.

Other different extensions of submodular functions to multivariate settings have been studied. Some of these include bisubmodular functions \cite{qi1988directed,ando1996characterization,fujishige2005bisubmodular,bouchet1995delta}, k-submodular functions \cite{huber2012towards,ward2014maximizing, yoshida2015monotone}, or skew bisubmodular functions \cite{huber2014skew,thapper2016complexity,thapper2012power}.

Finally, as mentioned in the Introduction, an important class of (multi-agent submodular optimization) problems arises
when $f \S = \sum_{i \in [k]} f_i(S_i)$. These problems have been widely studied in the case where $\F=\{V\}$,
both for minimization (\cite{hayrapetyan2005network,svitkina2010facility,ene2014hardness,chekuri2011submodular}) and maximization (\cite{fisher1978analysis,lehmann2001combinatorial,vondrak2008optimal}), and have also been considered for more general families \cite{goel2009approximability,santiago2018ma}.


\section{Multivariate submodular maximization}
\label{sec:MASA}

We describe two different reductions.
The first one reduces the capacitated multivariate problem CMVSO to a single-agent SO problem,
and it is based on the simple idea of taking $k$ disjoint copies of the original ground set.
We use this to establish an (optimal) MV gap of 1 for families such as spanning trees, matroids, and $p$-systems.
The second reduction is based on the multilinear extension of a set function.
We show that if the single-agent problem admits approximation via its multilinear relaxation (see Section \ref{sec:max-SA-MA-formulations}),
then we may extend this to its multivariate version with a constant factor loss, in the monotone and nonmonotone settings.
For the monotone case the MV gap is tight.

\subsection{The lifting reduction}
\label{sec:lifting-reduction}

We describe a generic reduction of CMVSO to a single-agent SO problem
$$\max/\min \bar{f}(S): S\in\L.$$
The argument is based on the idea of viewing assignments of elements $v$ to agents $i$ in a {\em multi-agent bipartite graph}.
This simple idea (which is equivalent to making $k$ disjoint copies of the ground set) already appeared in the classical work of Fisher et al \cite{fisher1978analysis}, and has since then been widely used \cite{lehmann2001combinatorial,vondrak2008optimal,calinescu2011maximizing,singh2012bisubmodular,santiago2018ma}.
We review briefly the reduction here for completeness and to fix notation. 

Consider the complete bipartite graph $G=([k]+V,E)$.
Every subset of edges $S \subseteq E$ can be written uniquely as
$S = \uplus_{i \in [k]} (\{i\} \times S_i)$ for some sets $S_i \subseteq V$.
This allows us to go from a multivariate objective (such as the one in (\ref{mv})) to a univariate objective $\bar{f}:2^{E}\to\R$ over the lifted space. Namely, for each set $S \subseteq E$ we define $\bar{f}(S)=f(S_1,S_2,\ldots,S_k)$. The function $\bar{f}$ is well-defined because of the one-to-one correspondence between sets $S\subseteq E$ and tuples $\S \subseteq V^k$.

We consider two families of sets over $E$ that capture the original constraints:
$$
\F':=\{S\subseteq E:S_{1}\uplus \cdots\uplus  S_{k}\in\F\} \hspace{15pt} \mbox{and} \hspace{15pt}
\H:=\{S\subseteq E:S_{i}\in\F_{i},\;\forall i\in[k]\}.
$$

\noindent
We now have:
\[
\begin{array}{cccccccc}
\max/\min & f(S_1,S_2,\ldots,S_k) & = & \max/\min & \bar{f}(S) & = & \max/\min & \bar{f}(S)\\
\mbox{s.t.} & S_{1}\uplus \cdots\uplus S_{k}\in\F &  & \mbox{s.t.} & S\in\F' \cap \H &  & \mbox{s.t.} & S\in\L\\
& S_{i}\in\F_{i}\,,\,\forall i\in[k]
\end{array},
\]
where in the last step we just let $\L:=\F' \cap\H$.

Clearly, this reduction is  interesting if our new function
$\bar{f}$ and the family of sets $\L$ have properties which allow us to handle
them computationally. This depends on the original structure
of the function $f$, and the set families $\F$ and $\F_{i}$. The following is straightforward.
\begin{claim}
	\label{claim:subm-invariance}
	If $f$ is a (nonnegative, respectively monotone) $k$-multi-submodular
	function, then $\bar{f}$ as defined above is also (nonnegative, respectively monotone) submodular.
\end{claim}

In Section \ref{sec:reduction-properties} we discuss several properties of the families $\F$ and $\F_i$ that are preserved under this reduction, as well as their algorithmic consequences.

\subsection{Multilinear extensions for MV problems}
\label{sec:max-SA-MA-formulations}

Given a set function $f:2^V \to \R$ (or equivalently $f:\{0,1\}^n \to \R$), we say that $g:[0,1]^n \to \R$ is an {\em extension} of $f$ if $g(\chi^S) = f(S)$ for each $S \subseteq V$. Clearly, there are many possible extensions that one could consider for any given set function. One that has been very useful in the submodular maximization setting due to its nice properties is the {\em multilinear extension}.

For a set function $f:\{0,1\}^V \to \R$ we define its \emph{multilinear extension}  $f^M:[0,1]^V \to \R$ (introduced in \cite{calinescu2007maximizing}) as
\begin{equation*}
f^M(z)=\sum_{S \subseteq V} f(S) \prod_{v \in S} z_v \prod_{v \notin S} (1-z_v).
\end{equation*}
An alternative way to define $f^M$ is in terms of expectations. Given a vector $z \in [0,1]^V$ let $R^z$ denote a random set that contains element $v_i$ independently with probability $z_{v_i}$. Then $f^M(z)= \E[f(R^z)]$, where the expectation is taken over random sets generated from the probability distribution induced by $z$. One very useful property of the multilinear extension is the following.

\begin{proposition}
\label{prop:multilinear-properties}
	Let $f:2^V \to \R$ be a submodular function and $f^M:[0,1]^n \to \R$ its multilinear extension.
	Then $f^M$ is convex along any direction $d = \bf{e_{v_i}} - \bf{e_{v_j}}$ for $i,j \in \{1,2,\ldots,n\}$, where $\bf{e_v}$ denotes the characteristic vector of $\{v\}$, i.e. the vector in $\R^V$ which has value $1$ in the $v$-th component and zero elsewhere.
\end{proposition}

This now gives rise to natural single-agent and multivariate relaxations.
The {\em single-agent multilinear extension relaxation} is:
\begin{equation}
\label{SA-ME}
(\mbox{SA-ME}) ~~~~
\max f^M(z): z \in P^*(\F),
\end{equation}
and the {\em multivariate multilinear extension relaxation} is:
\begin{equation}
\label{MV-ME}
(\mbox{MV-ME}) ~~~~
\max \bar{f}^M(z_1,z_2,\ldots,z_k): z_1 + z_2 + \cdots + z_k \in P^*(\F),
\end{equation}
where $P^*(\F)$ denotes some relaxation of the polytope $conv(\{\chi^S:S\in \F\})$ 
\footnote{$conv(X)$ denotes the convex hull of a set $X$ of vectors, and $\chi^S$
	denotes the characteristic vector of the set $S$.
}
, and $\bar{f}$ the lifted univariate function from the reduction in Section \ref{sec:lifting-reduction}.
Note that $\bar{f}$ is defined over vectors $\bar{z} = (z_1,z_2,\ldots,z_k) \in [0,1]^E$, where we think of $z_i \in \R^n$ as the vector associated to agent $i$.

The relaxation SA-ME has been used extensively \cite{calinescu2011maximizing,lee2009non,feldman2011unified,ene2016constrained,buchbinder2016constrained} in the submodular maximization literature.
The following result shows that when $f$ is nonnegative submodular and the relaxation $P^*(\F)$ is downwards closed and admits a polytime separation oracle, the relaxation SA-ME can be solved approximately in polytime.

\begin{theorem}[\cite{buchbinder2016constrained,vondrak2008optimal}]
	\label{thm:multilinear-solve-monot}
	Let $f:2^V \to \R_+$ be a nonnegative submodular function and $f^M:[0,1]^V \to \R_+$ its multilinear extension. Let $P \subseteq [0,1]^V$ be any downwards closed polytope that admits a polytime separation oracle, and denote $OPT = \max f^M(z): z\in P$. Then there is a polytime algorithm (\cite{buchbinder2016constrained}) that finds $z^* \in P$ such that $f^M(z^*) \geq 0.385 \cdot OPT$. Moreover, if $f$ is monotone there is a polytime algorithm (\cite{vondrak2008optimal}) that finds $z^* \in P$ such that $f^M(z^*) \geq (1-1/e) OPT$.
\end{theorem}

For monotone objectives the assumption that $P$ is downwards closed is without loss of generality.
This is not the case, however, when the objective is nonmonotone. Nonetheless, this restriction
is unavoidable, as Vondr{\'a}k \cite{vondrak2013symmetry} showed that no algorithm can find $z^* \in P$ such that $f^M(z^*) \geq c \cdot OPT$
for any constant $c>0$ when $P$ admits a polytime separation oracle but it is not downwards closed.

We can solve the MV-ME relaxation to the same approximation factor that SA-ME. To see this note that the
multivariate problem has the form $\{ \max g(w) : w \in W \subseteq {\bf R}^{nk} \}$
where $W$ is the downwards closed polytope $\{w=(z_1,...,z_k): \sum_i z_i \in P^*(\F)\}$ and  $g(w)=\bar{f}^M(z_1,z_2,\ldots,z_k)$. Clearly we have a polytime separation oracle for $W$ given that we have one for $P^*(\F)$.
Moreover, $g$ is the multilinear extension of a nonnegative submodular function (since by Claim \ref{claim:subm-invariance} we know $\bar{f}$ is nonnegative submodular), and we can now use Theorem \ref{thm:multilinear-solve-monot}.

\subsection{A tight $1-1/e$ MV gap}
\label{sec:max-MA-gap}

In this section we prove Theorem \ref{thm:max-MA-gap}.
The main idea is that we start with an (approximate) optimal solution
$z^* = z_1^* + z_2^* + \cdots + z^*_k$ to the MV-ME relaxation and
build a new feasible solution $\hat{z} = \hat{z}_1 + \hat{z}_2 + \cdots + \hat{z}_k$
where the $\hat{z}_i$ have supports $V_i$ that are pairwise disjoint.
We think of $V_i$ as the set of items associated (or pre-assigned) to agent $i$.
Once we have such a pre-assignment we consider the single-agent problem $\max g(S): S \in \F$ where
\begin{equation}
\label{g-function}
g(S)= f(S \cap V_1,S \cap V_2,\ldots,S \cap V_k).
\end{equation}
It is clear that $g$ is nonnegative
monotone submodular since $f$ is nonnegative monotone $k$-multi-submodular.
Moreover, for any feasible solution $S \in \F$ for this single-agent problem, we obtain
a multivariate solution of the same cost by setting
$S_i = S \cap V_i$, since then
$g(S) = f(S \cap V_1,S \cap V_2,\ldots,S \cap V_k) = f(S_1,S_2,\ldots,S_k).$

For a set $S \subseteq V$ and a vector $z \in [0,1]^V $ we denote by $z|_S$ the truncation of $z$ to elements of $S$.
That is, we set $z|_S (v) = z(v)$ for each $v \in S$ and to zero otherwise.
Then by definition of $g$ we have that $g^M(z) = \bar{f}^M(z|_{V_1},z|_{V_2},\ldots,z|_{V_k})$,
where $\bar{f}$ is the lifted function from Section \ref{sec:lifting-reduction}.
Moreover, if the sets $V_i$ are pairwise
disjoint, then $\bar{f}^M(z|_{V_1},z|_{V_2},\ldots,z|_{V_k}) =\bar{f}^M(z_1,z_2,\ldots,z_k)$.
The next result formalizes this observation.

\begin{proposition}
	\label{prop:max-g-function}
	Let $z = \sum_{i\in [k]} z_i$ be a feasible solution to MV-ME such that the
	vectors $z_i$ have pairwise disjoint supports $V_i$. Then
	$g^M(z) = \bar{f}^M(z_1,z_2,\ldots,z_k).$
\end{proposition}

We now have all the ingredients to prove our main result for maximization.
We note that a gap of $1-1/e$ appeared in \cite{santiago2018ma} for the case of separable objectives
$f \S = \sum_i f_i(S_i)$. That argument uses the component-wise linearity of the 
multilinear extension, while our proof for non-separable objectives strongly uses 
the convexity property from Proposition \ref{prop:multilinear-properties}.

\begin{reptheorem}{thm:max-MA-gap}
	If there is a (polytime) $\alpha(n)$-approximation for monotone SO($\F$) maximization
	based on rounding SA-ME, then there is a (polytime) $(1-1/e) \cdot \alpha(n)$-approximation
	for monotone MVSO($\F$) maximization. Furthermore, given a downwards closed family $\F$,
	if there is a (polytime) $\alpha(n)$-approximation
	for nonmonotone SO($\F$) maximization
	based on rounding SA-ME, then there is a (polytime) $0.385 \cdot \alpha(n)$-approximation
	for nonmonotone MVSO($\F$) maximization.
\end{reptheorem}
\begin{proof}
	
	We  discuss first the case of monotone objectives.
	
	{\sc STEP 1.}	
	Let $z^* = z_1^* + z_2^* + \cdots + z^*_k$ denote an approximate solution to MV-ME obtained via Theorem \ref{thm:multilinear-solve-monot}, and let $OPT_{frac}$ be the value of an optimal solution. We then have that $f^M(z^*_1,z^*_2,\ldots,z^*_k) \geq (1-1/e) OPT_{frac} \geq (1-1/e) OPT_{MV}$.
	
	{\sc STEP 2.}
	For an element $v \in V$ let $\bf{e_v}$ denote the characteristic vector of $\{v\}$, i.e. the vector in $\R^V$ which has value $1$ in the $v$-th component and zero elsewhere.
	Then by Proposition \ref{prop:multilinear-properties} we have that the function
	\begin{equation*}
	h(t) = \bar{f}^M(z^*_1,z^*_2,\ldots,z^*_{i-1},z^*_i + t \mathbf{e_v},z^*_{i+1},\ldots,z^*_{i'-1},z^*_{i'} - t \mathbf{e_v},z^*_{i'+1},\ldots,z^*_k)
	\end{equation*}
	is convex for any $v\in V$ and $i \neq i' \in [k]$. In particular, given any $v \in V$ such that there exist $i \neq i' \in [k]$ with $z^*_i(v),z^*_{i'}(v)>0$, there is always a choice so that increasing one component and decreasing the other by the same amount does not decrease the objective value.	
	
	Let $v \in V$ be such that there exist $i \neq i' \in [k]$ with $z^*_i(v),z^*_{i'}(v)>0$. Then, we either set $z^*_i(v) = z^*_i(v) + z^*_{i'}(v)$ and $z^*_{i'}(v) = 0$, or $z^*_{i'}(v) = z^*_i(v) + z^*_{i'}(v)$ and $z^*_i(v) = 0$, whichever does not decrease the objective value. We repeat until the vectors $z^*_i$ have pairwise disjoint support. Let us denote these new vectors by $\hat{z}_i$ and let $\hat{z}= \sum_{i\in [k]} \hat{z}_i$. Then notice that the vector $z^* = \sum_{i\in [k]} z^*_i$ remains invariant after performing each of the above updates (i.e. $\hat{z} = z^*$), and hence the new vectors $\hat{z}_i$ remain a feasible solution.

	{\sc STEP 3.}
	In the last step we use the function $g$ defined in (\ref{g-function}), with sets $V_i$ corresponding to the supports of the $\hat{z}_i$.
	Given our $\alpha$-approximation rounding assumption for SA-ME, we can round $\hat{z}$
	to find a set $\hat{S}$ such that $g(\hat{S})\geq \alpha g^M(\hat{z})$.
	Then, by setting $\hat{S}_i = \hat{S} \cap V_i$ we obtain a multivariate solution satisfying
	\begin{equation*}
	f(\hat{S}_1,\ldots,\hat{S}_k) = g(\hat{S}) \geq \alpha g^M(\hat{z}) = \alpha f^M(\hat{z}_1,\ldots,\hat{z}_k) \geq \alpha f^M(z^*_1,\ldots,z^*_k) \geq \alpha (1-1/e) OPT_{MV},
	\end{equation*}
	where the second equality follows from Proposition \ref{prop:max-g-function}.
	This completes the monotone proof.
	
	For the nonmonotone case the argument is very similar. 
	Here we restrict our attention to downwards closed families, since then we can get a $0.385$-approximation at STEP 1 via Theorem \ref{thm:multilinear-solve-monot}. We then apply STEP 2 and 3 in the same fashion as we did for monotone objectives. This leads to a $0.385 \cdot \alpha(n)$-approximation for the multivariate problem.
\end{proof}

\subsection{Invariance under the lifting reduction}
\label{sec:reduction-properties}

In Section \ref{sec:max-MA-gap} we established a MV gap of $1-1/e$ for monotone objectives and of $0.385$ for nonmonotone objectives and downwards closed families based on the multilinear formulations. In this section we describe several families with an (optimal) MV gap of $1$. Examples of such family classes include spanning trees, matroids, and $p$-systems. 

We saw in Section \ref{sec:lifting-reduction} that if the original function $f$ is $k$-multi-submodular then the lifted function $\bar{f}$ is submodular. We now discuss some properties of the original families $\F_i$ and $\F$ that are also preserved under the lifting reduction; these were already proved in \cite{santiago2018ma}. It is shown there, for instance, that if $\F$ induces a matroid (or more generally a $p$-system) over the ground set $V$, then so does the family $\F'$ over the lifted space $E$.
We summarize these results in Table \ref{table:properties-preserved}, and
discuss next some of the algorithmic consequences.


\begin{table}[ht]
	\caption{Invariant properties under the lifting reduction}
	\label{table:properties-preserved}
	\resizebox{\linewidth}{!}{
		\begin{tabular}{|c|c|c|c|}
			\hline
			& \textbf{Multivariate problem} & \textbf{Single-agent (i.e. reduced) problem} & Result\tabularnewline
			\hline
			1 & $f$ $k$-multi-submodular & \mbox{ \footnotesize $\bar{f}$} submodular & Section \ref{sec:lifting-reduction} \tabularnewline
			\hline
			2 & $f$ monotone & \mbox{ \footnotesize $\bar{f}$} monotone & Section \ref{sec:lifting-reduction} \tabularnewline
			\hline
			3 & $(V,\F)$ a $p$-system & $(E,\F')$ a $p$-system & \cite{santiago2018ma} \tabularnewline
			\hline
			4 & $\F$ = bases of a $p$-system & $\F'$ = bases of a $p$-system & \cite{santiago2018ma} \tabularnewline
			\hline
			5 & $(V,\F)$ a matroid & $(E,\F')$ a matroid & \cite{santiago2018ma}\tabularnewline
			\hline
			6 & $\F$ = bases of a matroid & $\F'$ = bases of a matroid & \cite{santiago2018ma} \tabularnewline
			\hline
			7 & $(V,\F)$ a $p$-matroid intersection & $(E,\F')$ a $p$-matroid intersection & \cite{santiago2018ma} \tabularnewline
			\hline
			8 & $(V,\F_{i})$ a matroid for all $i\in[k]$ & $(E,\H)$ a matroid & \cite{santiago2018ma} \tabularnewline
			\hline
			9 & $\F_{i}$ a ring family for all $i\in[k]$ & $\H$ a ring family & \cite{santiago2018ma} \tabularnewline
			\hline
			10 & $\F=$ forests (resp. spanning trees) & $\F'=$ forests (resp. spanning trees) &  \cite{santiago2018ma}\tabularnewline
			\hline
			11 & $\F=$ matchings (resp. perfect matchings) & $\F'=$ matchings (resp. perfect matchings) &  \cite{santiago2018ma}\tabularnewline
			\hline
			12 & $\F=$ $st$-paths & $\F'=$ $st$-paths &  \cite{santiago2018ma}\tabularnewline
			\hline
		\end{tabular}
	}
\end{table}


In the setting of MVSO (i.e. (\ref{eqn:MV})) this invariance allows us to leverage several results
from the single-agent to the multivariate setting. These are based on the following result, which uses
the fact that the size of the lifted space $E$ is $nk$.

\begin{theorem}
	\label{thm:max-invariance1}
	Let $\F$ be a matroid, a $p$-matroid intersection, or a $p$-system. If there is a (polytime) $\alpha(n)$-approximation algorithm for monotone (resp. nonmonotone) SO($\F$) maximization (resp. minimization),
	then there is a (polytime) $\alpha(nk)$-approximation algorithm for monotone (resp. nonmonotone) MVSO($\F$) maximization (resp. minimization).
\end{theorem}

For both monotone and nonmonotone maximization the approximation factors $\alpha(n)$ for the family classes described in Theorem \ref{thm:max-invariance1} are independent of (the size of the ground set) $n$. Hence, we immediately get that
$\alpha(nk)=\alpha(n)$ for those cases, and thus approximation factors for the corresponding multivariate and single-agent problems are the same.
In our MV gap terminology this implies an MV gap of 1 for such problems. This proves Theorem \ref{thm:max-invariance}.

In the setting of CMVSO (i.e. (\ref{mv})) the results described on entries $8$ and $9$ of Table \ref{table:properties-preserved}
provide additional modelling flexibility. This allows us to maintain decent approximations while combining several constraints.
For instance, for a monotone maximization instance of CMVSO
where $\F$ corresponds to a $p$-matroid intersection and the $\F_i$ are all matroids, the above invariance results
lead to a $(\frac{1}{p+1} - \epsilon)$-approximation.

The results from this section also imply that algorithms that behave very well in practice
(such as accelerated greedy variants \cite{mirzasoleiman2016fast} and distributed algorithms \cite{mirzasoleiman2016distributed}) 
for the corresponding single-agent problems, can also be used for the more general multivariate setting while preserving the same approximation guarantees. We believe this makes the CMVSO framework a good candidate for potential applications in large-scale machine learning problems.

\section{Multivariate submodular minimization}

In this section we present different approximation factors in terms of $n$ (i.e. the number of items) and $k$ (i.e. the number of agents) for the
monotone multivariate problem. Moreover, the approximation factors in terms of $n$ are essentially tight.

\subsection{A $\frac{\beta \cdot n}{1+(n-1)(1-c)}$-approximation}
\label{sec:min-curvature}
Let $f_S(v)=f(S+v)-f(S)$ denote the marginal gain of adding $v$ to $S$.
Given a normalized monotone submodular function $f:2^V \to \R$, its \emph{total curvature} $c$ and its \emph{curvature $c(S)$ with respect to a set} $S\subseteq V$ are defined as (in \cite{conforti1984submodular,vondrak2010curvature})
\begin{equation*}
c=\max_{j\in V} \frac{f(j)- f_{V-j}(j)}{f(j)} = 1 - \min_{j \in V} \frac{f_{V-j}(j)}{f(j)} \hspace*{0.5cm}\mbox{and} \hspace*{0.5cm} c(S)= 1 - \min_{j \in S} \frac{f_{S-j}(j)}{f(j)}.
\end{equation*}
We may think of this number as indicating how far the function $f$ is from being modular (with $c=0$ corresponding to being modular).
The notion of curvature has been widely used for univariate functions \cite{conforti1984submodular,vondrak2010curvature,iyer2013curvature,bai2018greed}, since it allows for better approximations and it is linear time computable.

Given a tuple $(S_1,\ldots,S_k) \in 2^{kV}$ and $(i,v) \in [k]\times V$, we denote by $(S_1,\ldots,S_k)+(i,v)$ the new tuple $(S_1,\ldots,S_{i-1},S_i + v, S_{i+1},\ldots,S_k)$.
It is natural to think of the quantity
\begin{equation*}
\label{eq:mv-diminishing-returns}
f_{\S}((i,v)) := f(\S + (i,v)) - f \S
\end{equation*}
as the marginal gain of assigning element $v$ to agent $i$ in the tuple $(S_1,\ldots,S_k)$.
We also use $f((i,v))$ to denote the quantity $f(\emptyset,\ldots,\emptyset,v,\emptyset,\ldots,\emptyset)$ where $v$ appears in the $ith$ component.

Given a normalized monotone $k$-multi-submodular function $f:2^{kV}\to \R$ we define its \emph{total curvature} $c$ and its \emph{curvature $c(S_1,\ldots,S_k)$ with respect to a tuple} $(S_1,\ldots,S_k) \subseteq V^k$ as
\begin{equation*}
c = 1 - \min_{i\in [k], v\in V} \frac{f_{(V,V,\ldots,V)-(i,v)}((i,v))}{f((i,v))} \hspace*{0.3cm}, \hspace*{0.3cm} c \S = 1 - \min_{ i\in [k], v \in S_i} \frac{f_{\S - (i,v)}((i,v))}{f((i,v))}
\end{equation*}

There is a straightforward correspondence between the curvature of $f$ and the curvature of its lifted version $\bar{f}$.

\begin{claim}
	\label{claim:curvature-lifted-function}
	Let $f:2^{kV}\to \R_+$ be a normalized nonnegative monotone $k$-multi-submodular function,
	and $\bar{f}:2^E \to \R_+$ the corresponding lifted function. Then, $f$ has total curvature
	$c$ if and only if $\bar{f}$ has total curvature $c$. Also, $f$ has curvature  $c \S$ with
	respect to a tuple if and only if $\bar{f}$ has curvature $c(S)$ with respect to the set $S$ in the
	lifted space corresponding to the tuple $\S$.
\end{claim}

The following curvature dependent result for univariate functions was proved in \cite{iyer2013curvature}.
\begin{proposition}[\cite{iyer2013curvature}]
	\label{prop:curvature-bilmes}
	Let $f:2^V \to \R$ be a nonnegative monotone submodular function, and $w:V\to \R_+$ the modular
	function given by $w(v)=f(v)$. Let $c(S)$ denote the curvature of $f$ with respect to $S$,	
	and $S^*$ denote an optimal solution to  $\min f(S): S\in \F$.
	Let $\hat{S} \in \F$ be a $\beta$-approximation for the problem $\min w(S):S \in \F$.
	Then
	$$
	f(\hat{S}) \leq \frac{\beta |S^*|}{1+(|S^*| -1)(1-c(S^*))} f(S^*).
	$$	
\end{proposition}

We extend the above result to the setting of $k$-multi-submodular objectives.
\begin{reptheorem}{thm:curvature}
	Let $f$ be a nonnegative monotone $k$-multi-submodular function, and let $\F$ be a family that admits a (polytime) $\beta$-approximation over modular functions. Denote by $(S^*_1,\ldots,S^*_k)$ an optimal solution to monotone MVSO($\F$) minimization, and by $c(S^*_1,\ldots,S^*_k)$ the curvature of $f$ with respect to $(S^*_1,\ldots,S^*_k)$. Then there is a (polytime) $\frac{\beta \sum_{i \in [k]} |S^*_i|}{1+(\sum_{i \in [k]} |S^*_i|-1)(1-c(S^*_1,\ldots,S^*_k))}$-approximation algorithm for monotone MVSO($\F$) minimization.
\end{reptheorem}
\begin{proof}
	Let $\bar{f}:2^E \to \R_+$ and $\F'$ be the lifted function and family described in the lifting reduction from Section \ref{sec:lifting-reduction}. We then have
	\[
	\begin{array}{cccccccc}
	\min & f(S_1,S_2,\ldots,S_k) & = &\min & \bar{f}(S)\\
	\mbox{s.t.} & S_{1}\uplus \cdots\uplus S_{k}\in\F &  & \mbox{s.t.} & S\in\F'
	\end{array}.
	\]
	
	Define a modular function $\bar{w}:E \to \R_+$ over the edges of the bipartite graph by
	$\bar{w}(i,v) = \bar{f}(i,v)$. Also, let $OPT = \min \bar{f}(S):S \in \F'$ and denote by $S^*$
	such a minimizer. Then by Proposition \ref{prop:curvature-bilmes} we have that any $\beta$-approximation
	for the modular minimization problem $\min \bar{w}(S): S \in \F'$ is a
	$\beta |S^*| / ( 1+(|S^*| -1)(1-c(S^*)))$-approximation for the problem $\min \bar{f}(S): S\in \F'$
	(and hence also for our original multivariate problem).
	Moreover, notice that by Claim \ref{claim:curvature-lifted-function} the curvature $c(S^*)$ of $\bar{f}$ with respect to $S^*$ is the same
	as the curvature $c(S^*_1,\ldots,S^*_k)$ of $f$ with respect to $(S^*_1,\ldots,S^*_k)$, where
	$(S^*_1,\ldots,S^*_k)$ is the tuple associated to $S^*$. Thus, we immediately get the desired approximation
	assuming that a (polytime) $\beta$-approximation is available for $\min \bar{w}(S): S \in \F'$.
	
	However, the lifted family $\F'$ could be more complicated than $\F$,
	and hence we would like to have an assumption depending on the original $\F$ (and not $\F'$).
	This can be achieved at no extra loss using the modularity of $\bar{w}$.
	Indeed, we can define a new modular function $w:V\to \R_+$ as $w(v) = \argmin_{i \in [k]} \bar{w}(i,v)$
	for each $v\in V$, breaking ties arbitrarily. It is then clear that
	$\min \bar{w}(S): S \in \F' = \min w(S): S \in \F$, since $\bar{w}$ always uses the cheapest
	copy of $v$ (i.e. assign $v$ to the agent $i$ with the smallest cost for it).
	
	We then get that any $\beta$-approximation for $\min w(S): S \in \F$ is also a $\beta$-approximation for
	$\min \bar{w}(S): S \in \F'$, and hence the desired result in terms of $\F$ follows.
\end{proof}

\subsection{MV gap of $k$}
\label{sec:k-proof}

Due to monotonicity, one may often assume that we are working with a
family $\F$ which is {\em upwards-closed}, aka a {\em blocking family}.
This can be done without loss of generality even if we seek polytime algorithms, since separation over a polytope with vertices $\{\chi^F: F \in \F\}$ implies separation over its dominant. We refer the reader to Appendix~\ref{sec:blocking} for details.

For a normalized set function $f:\{0,1\}^V \to \R$ one can define
its {\em Lov\'asz extension} $f^L:\R_+^V \to \R$ (introduced
in \cite{lovasz1983submodular}) as follows.
Let  $0   <  v_1 < v_2 < ... < v_m$  be the distinct
positive values taken in some vector $z \in \R_+^V$, and let $v_0=0$.
For each $i \in \{0,1,...,m\}$ define the set $S_i:=\{ j:  z_j > v_i\}$. In particular, $S_0$ is the support of $z$
and  $S_m=\emptyset$. One then defines:
\[
f^L(z) =  \sum_{i=0}^{m-1}   (v_{i+1}-v_i) f(S_i).
\]

It follows from the definition that $f^L$ is positively homogeneous, that is $f^L(\alpha z)=\alpha f^L(z)$ for any $\alpha > 0$ and $z \in \R_+^V$.
Moreover, it is also straightforward to see that $f^L$ is a monotone function if $f$ is.
We have the following result due to Lov\'asz.
\begin{lemma}
	\label{lem:LE-convex}
	[Lov\'asz \cite{lovasz1983submodular}]
	The function $f^L$ is convex if and only if $f$ is submodular.
\end{lemma}

This now gives rise to natural convex relaxations for the single-agent and multivariate problems based on some fractional relaxation $P^*(\F)$ of the integral polyhedron $conv(\{\chi^S:S\in \F\})$.
The {\em single-agent Lov\'asz extension formulation} (used in \cite{iwata2009submodular,iyer2014monotone}) is:
\begin{equation}
\label{SA-LE}
(\mbox{SA-LE}) ~~~~
\min f^L(z): z \in P^*(\F),
\end{equation}
and the {\em multivariate Lov\'asz extension formulation} is:
\begin{equation}
\label{MV-LE}
(\mbox{MV-LE}) ~~~~
\min \bar{f}^L(z_1,z_2,\ldots,z_k): z_1 + z_2 + \cdots + z_k \in P^*(\F),
\end{equation}
where $\bar{f}$ is the lifted univariate function from Section \ref{sec:lifting-reduction}.

By standard methods (e.g. see \cite{santiago2018ma}) one may solve SA-LE in polytime
if one can separate over the relaxation $P^*(\F)$. This is often the case for many
natural families such as spanning trees, perfect matchings, $st$-paths, and vertex covers.

We can also solve MV-LE as long as we have polytime separation of $P^*(\F)$.
This follows from the fact that the multivariate problem has the form $\{ \min g(w) : w \in W \subseteq {\bf R}^{nk} \}$
where $W$ is the full-dimensional convex body $\{w=(z_1,...,z_k): \sum_i z_i \in P^*(\F)\}$ and $g(w)=\bar{f}^L(z_1,z_2,\ldots,z_k)$.
Clearly we have a polytime separation oracle for $W$ given that we have one for $P^*(\F)$.
Moreover, by Lemma \ref{lem:LE-convex} and Claim \ref{claim:subm-invariance}, $g$ is convex since it is the Lov\'asz extension of a nonnegative submodular function $\bar{f}$.
Hence we may apply Ellipsoid as in the single-agent case.


We now give an approximation in terms of the number of agents, which becomes preferable when $k$ is not too large.
The high-level idea behind our reduction is the same as in the maximization setting (see Section \ref{sec:max-MA-gap}). That is, we start with an optimal solution $z^* = z_1^* + z_2^* + \cdots + z^*_k$ to the multivariate MV-LE relaxation and
build a new feasible solution $\hat{z} = \hat{z}_1 + \hat{z}_2 + \cdots + \hat{z}_k$
where the $\hat{z}_i$ have supports $V_i$ that are pairwise disjoint.
We then use for the rounding step the single-agent problem (as previously defined in (\ref{g-function}) for the maximization setting) $\min g(S): S \in \F$ where $g(S)= f(S \cap V_1,S \cap V_2,\ldots,S \cap V_k)$.

Similarly to Proposition \ref{prop:max-g-function} which dealt with the multilinear extension,
we have the following result for the Lov\'asz extension.

\begin{proposition}
	\label{prop:g-function}
	Let $z = z_1 + z_2 + \cdots + z_k$ be a feasible solution to MV-LE such that the
	vectors $z_i$ have pairwise disjoint supports $V_i$. Then
	$g^L(z) = \bar{f}^L(z_1,z_2,\ldots,z_k).$
\end{proposition}

\begin{reptheorem}{thm:min-k-gap}
	Suppose there is a (polytime) $\alpha(n)$-approximation for monotone SO($\F$) minimization
	based on rounding SA-LE. Then there is a (polytime) $k \alpha(n)$-approximation
	for monotone MVSO($\F$) minimization.
\end{reptheorem}
\begin{proof}
	Let $z^* = z_1^* + z_2^* + \cdots + z^*_k$ denote an optimal solution to MV-LE with value $OPT_{frac}$.
	We build a new feasible solution $\hat{z} = \hat{z}_1 + \hat{z}_2 + \cdots + \hat{z}_k$ as follows.
	For each element $v \in V$ let $i' = \argmax_{i \in [k]} z^*_i(v)$, breaking ties arbitrarily.
	Then set $\hat{z}_{i'}(v)=k z^*_i(v)$ and $\hat{z}_{i}(v)=0$ for each $i\neq i'$.
	By construction we have $\hat{z} \geq z^*$, and hence this is indeed a feasible solution. Moreover, by construction we also
	have that $\hat{z}_i \leq k z_i^*$ for each $i \in [k]$. Hence, given the monotonicity and homogeneity of $\bar{f}^L$ it follows that	
	\begin{equation*}
	\bar{f}^L(\hat{z}_1,\hat{z}_2,\ldots,\hat{z}_k) \leq \bar{f}^L(kz^*_1,kz^*_2,\ldots,kz^*_k) = k \bar{f}^L(z^*_1,z^*_2,\ldots,z^*_k) = k \cdot OPT_{frac} \leq k \cdot OPT_{MV}.
	\end{equation*}
	Since the $\hat{z}_i$ have disjoint supports $V_i$,
	for the single-agent rounding step we can now use the function $g$ defined in (\ref{g-function}) with the sets $V_i$.
	Given our $\alpha$-approximation rounding assumption for SA-LE, we can round $\hat{z}$
	to find a set $\hat{S}$ such that $g(\hat{S})\leq \alpha g^L(\hat{z})$.
	Then, by setting $\hat{S}_i = \hat{S} \cap V_i$ we obtain a multivariate solution satisfying
	\begin{equation*}
	f(\hat{S}_1,\hat{S}_2,\ldots,\hat{S}_k) = g(\hat{S}) \leq \alpha g^L(\hat{z}) = \alpha \bar{f}^L(\hat{z}_1,\hat{z}_2,\ldots,\hat{z}_k)
	\leq \alpha k \cdot OPT_{MV},
	\end{equation*}
	where the second equality follows from Proposition \ref{prop:g-function}.
	This completes the proof.
\end{proof}

The above theorem has interesting consequences. We now discuss one that leads to a polytime $k$-approximation for a much more general version of the submodular facility location problem considered by Sviktina and Tardos \cite{svitkina2010facility}, where $k$ denotes the number of facilities.

\begin{corollary}
	\label{Cor Non-separable}
	There is a  polytime $k$-approximation for the monotone MVSO($\F$) minimization problem over $\F=\{V\}$.
\end{corollary}
\begin{proof}
	Notice that the single-agent version of the above multivariate problem is the trivial $\min f(S): S \in \{V\}$. Hence a polytime exact algorithm is available for the single-agent problem and thus a polytime $k$-approximation is available for the multivariate version.
\end{proof}

\subsection{An $o(\frac{n}{\log n})$ lower bound hardness for $\F=\{V\}$}
\label{sec:hardness-monot}

In this section we focus on the special case where $\F=\{V\}$. 
That is, we are looking for the optimal splitting of all the elements among the agents.
We show that the curvature dependent approximation factors obtained in Theorem \ref{thm:curvature} are essentially tight.

We follow a technique from \cite{goemans2009approximating,feige2011maximizing,svitkina2011submodular} and build two multivariate submodular functions that are hard to distinguish with high probability for any (even randomized) algorithm. 

Assume that $k=n$, and let $\mathcal{R}:= (R_1,R_2,\ldots,R_n) \subseteq V^n$ be a random partition of $V$. Notice that $\sum_{i=1}^n |R_i| = n$. Let $\beta = \omega(\log n)$ and such that $\beta$ is integer. Consider the two nonnegative monotone $n$-multi-submodular functions $f_1,f_2:2^{nV} \to \R_+$ given by:
\begin{equation}
\label{eq:functions-hardness}
f_1 (S_1,\ldots,S_n) = \min \{n, \sum_{i=1}^n |S_i|\} \hspace*{0.2cm} \mbox{,} \hspace*{0.2cm}  f_2 (S_1,\ldots,S_n) = \min \{f_1 (S_1,\ldots,S_n), \beta + \sum_{i=1}^n |S_i \cap \bar{R_i}|\},
\end{equation}
where $\bar{R_i}$ denotes the complement of the set $R_i$, i.e. $\bar{R_i} = V - R_i$.

The work of Svitkina and Fleischer \cite{svitkina2011submodular} show the following result for univariate functions.

\begin{lemma}[\cite{svitkina2011submodular}]
	\label{lem:S-F}
	Let $f_1$ and $f_2$ be two set functions, with $f_2$, but not $f_1$, parametrized by a string of random bits $r$. If for any set $S$, chosen without knowledge of $r$, the probability (over $r$) that $f_1(S) \neq f_2 (S)$ is $n^{-\omega(1)}$, then any algorithm that makes a polynomial number of oracle queries has probability at most $n^{-\omega(1)}$ of distinguishing $f_1$ and $f_2$.
\end{lemma}

The above clearly generalizes to the setting of tuples (i.e. multivariate objectives) in a natural and straightforward way. The only difference is that our ground set in the lifted space has now size $n^2$ instead of $n$. 

\begin{lemma}
	\label{lem:S-F-MV}
	Let $f_1$ and $f_2$ be two $n$-multivariate set functions, with $f_2$, but not $f_1$, parametrized by a string of random bits $r$. If for any tuple $(S_1,\ldots,S_n)$, chosen without knowledge of $r$, the probability (over $r$) that $f_1 (S_1,\ldots,S_n) \neq f_2 (S_1,\ldots,S_n)$ is $n^{-\omega(1)}$, then any algorithm that makes a polynomial number of oracle queries has probability at most $n^{-\omega(1)}$ of distinguishing $f_1$ and $f_2$.
\end{lemma}

We can use Lemma \ref{lem:S-F-MV} to show the following result for the functions defined in (\ref{eq:functions-hardness}).

\begin{lemma}
	\label{lem:MV-monot-hardness}
	Any algorithm that makes a polynomial number of oracle calls has probability $n^{-\omega(1)}$ of distinguishing the functions $f_1$ and $f_2$ above.
\end{lemma}
\begin{proof}
	By Lemma \ref{lem:S-F-MV}  it suffices to show that for any tuple $(S_1,\ldots,S_n)$ the probability (over the random choice of the partition $\mathcal{R}$) that $f_1 (S_1,\ldots,S_n) \neq f_2 (S_1,\ldots,S_n)$ is at most $n^{-\omega(1)}$. 
	
	Let us denote this probability by $p (S_1,\ldots,S_n)$. We first show that $p (S_1,\ldots,S_n)$ is maximized for tuples $(S_1,\ldots,S_n)$ satisfying $\sum_{i=1}^n |S_i| = n$. First suppose that $\sum_{i=1}^n |S_i| > n$. Then $p (S_1,\ldots,S_n) = \mathbb{P}[\beta + \sum_{i=1}^n |S_i \cap \bar{R_i}| < n]$. But this probability can only increase if an element is removed from some set $S_i$. Similarly, in the case where $\sum_{i=1}^n |S_i| < n$, we get $p (S_1,\ldots,S_n) =\mathbb{P} [\beta + \sum_{i=1}^n |S_i \cap \bar{R_i}| < \sum_{i=1}^n |S_i|] = \mathbb{P} [\sum_{i=1}^n |S_i \cap R_i| > \beta]$. But this probability can only increase if an element is added to some set $S_i$.
		
	So let $(S_1,\ldots,S_n)$ be any fixed tuple satisfying $\sum_{i=1}^n |S_i| = n$, and let $m_v := \sum_{i:S_i\ni v} 1$ denote the number of sets $S_i$ that contain a copy of $v$. Note that $\sum_{v\in V} m_v = \sum_{i=1}^n |S_i| = n$. 
	Let us consider a random partition $\mathcal{R}=(R_1,R_2,\ldots,R_n)$ which is obtained by placing each element $v\in V$ independently and uniformly at random into one of the sets $R_1,R_2,\ldots,R_n$. Let $X_v$ be a random variable for each $v \in V$, defined by $X_v = \sum_{i=1}^n |S_i \cap R_i \cap \{v\}|$. That is, $X_v = 1$ if $v$ is assigned to an $R_i$ such that $S_i \ni v$ (which happens with probability $m_v/n$), and $X_v = 0$ otherwise. Clearly, the random variables $\{X_v\}_{v\in V}$ are pairwise independent. Moreover, we have that the expected value of $\sum_{i=1}^n |S_i \cap R_i|$ is given by
	\begin{equation*}
	\mu :=  \mathbb{E}[\sum_{i=1}^n |S_i \cap R_i|] = \mathbb{E}[ \sum_{v\in V}X_v] = \sum_{v\in V} \mathbb{E}[X_v] = \sum_{v\in V} \frac{m_v}{n} = 1.
	\end{equation*}
	Then, by Chernoff bounds and using that $\beta$ is an integer we obtain
	\begin{align*}
	p (S_1,\ldots,S_n) = &  
	\mathbb{P} [\sum_{i=1}^n |S_i \cap R_i| > \beta] = \mathbb{P} [\sum_{i=1}^n |S_i \cap R_i| \geq \beta + 1] \\
	= & \mathbb{P} [\sum_{v\in V} X_v \geq (1+\beta) \mu] \leq e^{- \mu \beta / 3} = e^{-\beta/3} = e^{-\omega(\log n)} = n^{-\omega(1)}.
	\end{align*} 
\end{proof}

We now prove our (curvature independent) lower bound result.

\begin{theorem}
	\label{thm:MV-monot-hardness-proof}
	The monotone MVSO($\F$) minimization problem over $\F=\{V\}$ cannot be approximated to a ratio $o(n/\log n)$ in the value oracle model with polynomially many queries.
\end{theorem} 
\begin{proof}
	Assume there is a polytime algorithm achieving an approximation factor of $\alpha = o(n / \log n)$. Choose $\beta = \omega(\log n)$ such that $\alpha \beta <n$. Consider the output of the algorithm when $f_2$ is given as input. The optimal solution in this case is the partition $\mathcal{R}=(R_1,\ldots,R_k)$, with $f_2(R_1,\ldots,R_k)=\beta$. So the algorithm produces a feasible solution (i.e. a partition) $(S^*_1,\ldots,S^*_k)$ satisfying $f_2(S^*_1,\ldots,S^*_k) \leq \alpha \beta < n$. However, since $f_1$ takes value exactly $n$ over any partition, there is no feasible solution $(S_1,\ldots,S_n)$ such that $f_1 (S_1,\ldots,S_n) < n$. This means that if the input is the function $f_1$ then the algorithm produces a different answer, thus distinguishing between $f_1$ and $f_2$, contradicting Lemma \ref{lem:MV-monot-hardness}.
\end{proof}

The above result contrasts with the known $O(\log n)$ approximation (\cite{svitkina2010facility}) for the case where the multivariate objective is separable, that is $f \S = \sum_i f_i(S_i)$. These two facts combined now prove Theorem \ref{thm:MV-monot-hardness}.

We use a construction from Iyer et al \cite{iyer2013curvature} to explicitly introduce the effect of curvature
into the lower bound. Their work is for univariate functions, but it can be naturally extended to the
multivariate setting. We modify the functions $f_1,f_2$ from (\ref{eq:functions-hardness}) as follows:
\begin{equation*}
\label{eq:functions-hardness-curvature}
f^c_i \S = c \cdot f_i \S + (1-c) \sum_{i=1}^n |S_i| \hspace*{0.2cm} \mbox{, for } i=1,2.
\end{equation*}
It is then straightforward to check that both $f_1^c$ and $f_2^c$ have total curvature $c$.
Moreover, since $f_1 \S = f_2 \S$ if and only if $f^c_1 \S = f^c_2 \S$, by Lemma \ref{lem:MV-monot-hardness} it follows that any algorithm that makes polynomially many queries is not able to distinguish between $f^c_1$ and $f^c_2$ with high probability. In addition, the gap between the optimal solutions for these two functions is given by
\begin{equation*}
	\frac{OPT_1}{OPT_2}= \frac{cn + (1-c)n}{c \beta + (1-c)n} = \frac{n}{c \beta + (1-c)n} = \frac{n}{\beta + (n-\beta)(1-c)} = \frac{n/\beta}{1 + (n/ \beta-1)(1-c)}.
\end{equation*}
Then, since $\beta = \omega(\log n)$, the (curvature dependent) lower bound follows.

\begin{reptheorem}{thm:MV-monot-hardness-curvature}
	The monotone MVSO($\F$) minimization problem over $\F=\{V\}$ and objectives $f$ with total curvature $c$
	cannot be approximated to a ratio $o(\frac{n/\log n}{1+(\frac{n}{\log n}-1)(1-c)})$ in the value oracle model with polynomial number of queries.
\end{reptheorem}

\section{Conclusions}
\label{sec:conclusions}

We introduce a new class of multivariate submodular optimization problems, and give information theoretic evidence that this class encodes much
more than the separable versions arising in multi-agent objectives. We provide some explicit examples and potential applications.

For maximization, we show that practical algorithms such as accelerated greedy variants and distributed algorithms achieve good approximation guarantees under very general constraints. For arbitrary families, we show MV gaps of $1-1/e$ and $0.385$ for the monotone and nonmonotone problems respectively, and the MV gap for monotone objectives is tight.

For minimization the news is worse. However, we give (essentially tight) approximation factors with respect to the curvature of the multivariate objective function. This may lead to significant gains in several settings.

\bibliography{REFERENCES}

\appendix

\section{Properties of $k$-multi-submodular functions}
\label{sec:properties-mv-functions}

In this section we discuss several properties of $k$-multi-submodular functions. We see that some of the characterizations and results that hold for univariate submodular functions extend naturally to the multivariate setting.

We start by showing that our definition of submodularity in the multivariate setting captures the diminishing return property.
Recall that we usually think of the pair $(i,v) \in [k] \times V$ as the assignment of element $v$ to agent $i$. We use this to introduce some notation for adding an element to a tuple.

\begin{definition}
	\label{def:mv-sum}
	Given a tuple $(S_1,\ldots,S_k) \in 2^{kV}$ and $(i,v) \in [k]\times V$, we denote by $(S_1,\ldots,S_k)+(i,v)$ the new tuple $(S_1,\ldots,S_{i-1},S_i + v, S_{i+1},\ldots,S_k)$.
\end{definition}
Then, it is natural to think of the quantity
\begin{equation}
\label{eq:mv-diminishing-returns-appendix}
f(S_1,\ldots,S_{i-1},S_i + v, S_{i+1},\ldots,S_k) - f(S_1,\ldots,S_{i-1},S_i, S_{i+1},\ldots,S_k)
\end{equation}
as the marginal gain of assigning element $v$ to agent $i$ in the tuple $(S_1,\ldots,S_k)$. Notice that with the notation introduced in Definition \ref{def:mv-sum} we have that (\ref{eq:mv-diminishing-returns-appendix}) can be also written as
$$
f((S_1,\ldots,S_k) + (i,v))- f(S_1,\ldots,S_k).
$$
This leads to the following diminishing returns characterizations in the multivariate setting.

\begin{proposition}
	\label{prop:equiv-mv-def-1}
	A multivariate function $f:2^{kV} \to \R$ is $k$-multi-submodular if and only if for all tuples $(S_1,\ldots,S_k) \subseteq (T_1,\ldots,T_k)$ and $(i,v) \in [k]\times V$ such that $v \notin T_i$ we have
	\begin{equation}
	\label{def:mv1}
	f((S_1,\ldots,S_k) + (i,v))- f(S_1,\ldots,S_k) \geq f((T_1,\ldots,T_k)+(i,v)) - f(T_1,\ldots,T_k).
	\end{equation}
\end{proposition}
\begin{proof}
	We make use of the lifting reduction presented in Section \ref{sec:lifting-reduction}.
	Let $\bar{f}:2^E \to \R$ denote the lifted function, and let $S,T \subseteq E$ be the sets in the lifted
	space corresponding to the tuples $\S$ and $\T$ respectively.
	Then, since $(S_1,\ldots,S_k) \subseteq (T_1,\ldots,T_k)$, we know that $S \subseteq T$. Moreover, notice that
	\begin{equation*}
	f((S_1,\ldots,S_k) + (i,v))- f(S_1,\ldots,S_k) = \bar{f}(S+(i,v))-\bar{f}(S)
	\end{equation*}
	and
	\begin{equation*}
	f((T_1,\ldots,T_k) + (i,v))- f(T_1,\ldots,T_k) = \bar{f}(T+(i,v))-\bar{f}(T).
	\end{equation*}
	In addition, from Claim \ref{claim:subm-invariance} in Section \ref{sec:lifting-reduction} we know that $f$ is $k$-multi-submodular if and only if $\bar{f}$ is submodular. Then the result follows by observing the following.
	\begin{align*}
	& f \mbox{ is $k$-multi-submodular}\\
	\iff & \bar{f} \mbox{ is submodular} \\
	\iff & \bar{f}(S+(i,v))-\bar{f}(S) \geq \bar{f}(T+(i,v))-\bar{f}(T) \quad \mbox{ for all } S \subseteq T \mbox{ and } (i,v) \notin T \\
	\iff & f((S_1,\ldots,S_k) + (i,v))- f(S_1,\ldots,S_k) \geq f((T_1,\ldots,T_k)+(i,v)) - f(T_1,\ldots,T_k)\\
	& \mbox{ for all } (S_1,\ldots,S_k) \subseteq (T_1,\ldots,T_k) \mbox{ and } v \notin T_i.
	\end{align*}
\end{proof}

The proof of the above result also shows the following characterization of $k$-multi-submodular functions.
\begin{proposition}
	\label{prop:equiv-mv-def-2}
	A multivariate function $f:2^{kV} \to \R$ is $k$-multi-submodular if and only if for all tuples $(S_1,\ldots,S_k)$ and $(i,v),(j,u) \in [k]\times V$ such that $v \notin S_i$ and $u \notin S_j$ we have
	\begin{equation}
	\label{def:mv2}
	f((S_1,\ldots,S_k) + (i,v))- f(S_1,\ldots,S_k) \geq f((S_1,\ldots,S_k)+(j,u)+(i,v)) - f((S_1,\ldots,S_k)+(j,u)).
	\end{equation}
\end{proposition}

\section{Examples of $k$-multi-submodular functions}
\label{sec:examples-mv-functions}

We now provide some explicit examples of $k$-multi-submodular functions that lead to interesting applications.

\begin{lemma}
	\label{lem:MV-multilinear}
	Consider a multilinear function $h:\Z^k_+ \to \R$ given by $h(z)=\sum_{S \subseteq [k]} a_S \prod_{m \in S} z_m$. Let $f:2^{kV} \to \R$ be a multivariate set function defined as $f(S_1,\ldots,S_k)=h(|S_1|,\ldots,|S_k|)$. Then, $f$ is $k$-multi-submodular if and only if
	\begin{equation}
	\label{eq:MV-multilinear-conditions}
	a_S \leq 0 \quad \forall S \subseteq [k].
	\end{equation}
\end{lemma}
\begin{proof}
	By Proposition \ref{prop:equiv-mv-def-2} we know that $f$ is $k$-multi-submodular if and only if
	condition (\ref{def:mv2}) is satisfied. Let $\S$ be an arbitrary tuple and let $(i,v),(j,u) \in [k]\times V$ such that $v \notin S_i, u \notin S_j$. Denote by $z^0$ the integer vector with components $z^0_i = |S_i|$. That is, $z^0 = (|S_1|,|S_2|,\ldots,|S_k|) \in \Z^k_+$. We call $z^0$ the cardinality vector associated to the tuple $\S$. In a similar way, let $z^1$ be the cardinality vector associated to the tuple $\S + (i,v)$, $z^2$ the cardinality vector associated to $\S + (j,u)$, and $z^3$ the cardinality vector associated to $\S + (i,v) + (j,u)$. Now notice that condition (\ref{def:mv2}) can be written as
	\begin{equation}
	\label{eq:MV-multilinear-conditions1}
	h(z^1) - h(z^0) \geq h(z^3) - h(z^2)
	\end{equation}
	for all $z^0,z^1,z^2,z^3 \in \Z^k_+$ such that $z^1 = z^0 + \bf{e_i}$, $z^2 = z^0 + \bf{e_j}$, and $z^3 = z^0 + \bf{e_i} + \bf{e_j}$, where $\bf{e_i}$ is the characteristic vector on the $ith$ component, and similarly for $\bf{e_j}$.

	We show that (\ref{eq:MV-multilinear-conditions1}) is equivalent to (\ref{eq:MV-multilinear-conditions}).
	Using that $z^1_m = z^0_m$ for all $m \neq i$ and $z^1_i = z^0_i + 1$, we have
	\begin{align*}
	h(z^1) - h(z^0) =& \sum_{S \subseteq [k]} a_S \prod_{m \in S} z^1_m - \sum_{S \subseteq [k]} a_S \prod_{m \in S} z^0_m\\
	=& \sum_{S \subseteq [k]} a_S [\prod_{m \in S} z^1_m - \prod_{m \in S} z^0_m]\\
	=& \sum_{S \ni i} a_S [\prod_{m \in S} z^1_m - \prod_{m \in S} z^0_m]\\
	=& \sum_{S \ni i} a_S [(z^0_i+1)\prod_{m \in S, m \neq i} z^0_m - \prod_{m \in S} z^0_m]\\
	=& \sum_{S \ni i} a_S \prod_{m \in S, m \neq i} z^0_m.
	\end{align*}
	Similarly, using that $z^3_m = z^2_m$ for all $m \neq i$ and $z^3_i = z^2_i + 1$, we have
	\begin{align*}
	h(z^3) - h(z^2) =& \sum_{S \ni i} a_S \prod_{m \in S, m \neq i} z^2_m\\
	=& \sum_{S \ni i, S \ni j} a_S \prod_{m \in S, m \neq i} z^2_m + \sum_{S \ni i, S \notni j} a_S \prod_{m \in S, m \neq i} z^2_m\\
	=& \sum_{S \ni i, S \ni j} a_S (z^0_j +1) \prod_{m \in S, m \neq i,j} z^0_m + \sum_{S \ni i, S \notni j} a_S \prod_{m \in S, m \neq i} z^0_m\\
	=&\sum_{S \ni i} a_S \prod_{m \in S, m \neq i} z^0_m + \sum_{S \ni i, S \ni j} a_S \prod_{m \in S, m \neq i,j} z^0_m\\
	=&h(z^1)-h(z^0)+\sum_{S \ni i, S \ni j} a_S \prod_{m \in S, m \neq i,j} z^0_m,
	\end{align*}
	where in the third equality we use that $z^2 = z^0 + \bf{e_j}$. Thus, we have
	\begin{equation*}
	h(z^1) - h(z^0) \geq h(z^3) - h(z^2)
	\iff \sum_{S \ni i, S \ni j} a_S \prod_{m \in S, m \neq i,j} z^0_m \leq 0.
	\end{equation*}
	Since the above must hold for all $z^0,z^1,z^2,z^3 \in \Z^k_+$ and $i,j \in [k]$ such that $z^1 = z^0 + \bf{e_i}$, $z^2 = z^0 + \bf{e_j}$, and $z^3 = z^0 + \bf{e_i} + \bf{e_j}$, we immediately get that (\ref{eq:MV-multilinear-conditions1}) is equivalent to (\ref{eq:MV-multilinear-conditions}) as we wanted to show.
\end{proof}

\begin{lemma}
	\label{lem:MIMO}
	Consider a quadratic function $h:\Z^k_+ \to \R$ given by $h(z)=z^T A z$ for some matrix $A=(a_{ij})$. Let $f:2^{kV} \to \R$ be a multivariate set function defined as $f(S_1,\ldots,S_k)=h(|S_1|,\ldots,|S_k|)$. Then, $f$ is $k$-multi-submodular if and only if $A$ satisfies
	\begin{equation}
	\label{eq:MIMO-conditions}
	a_{ij} + a_{ji} \leq 0 \quad \forall i,j \in [k].
	\end{equation}
\end{lemma}
\begin{proof}
	The proof is very similar to that of Lemma  \ref{lem:MV-multilinear}.
	By Proposition \ref{prop:equiv-mv-def-2} we know that $f$ is $k$-multi-submodular if and only if
	condition (\ref{def:mv2}) is satisfied. Let $\S$ be an arbitrary tuple and let $(i,v),(j,u) \in [k]\times V$ such that $v \notin S_i, u \notin S_j$. Denote by $z^0$ the integer vector with components $z^0_i = |S_i|$. That is, $z^0 = (|S_1|,|S_2|,\ldots,|S_k|) \in \Z^k_+$. We call $z^0$ the cardinality vector associated to the tuple $\S$. In a similar way, let $z^1$ be the cardinality vector associated to the tuple $\S + (i,v)$, $z^2$ the cardinality vector associated to $\S + (j,u)$, and $z^3$ the cardinality vector associated to $\S + (i,v) + (j,u)$. Now notice that condition (\ref{def:mv2}) can be written as
	\begin{equation}
	\label{eq:MIMO1}
	h(z^1) - h(z^0) \geq h(z^3) - h(z^2)
	\end{equation}
	for all $z^0,z^1,z^2,z^3 \in \Z^k_+$ such that $z^1 = z^0 + \bf{e_i}$, $z^2 = z^0 + \bf{e_j}$, and $z^3 = z^0 + \bf{e_i} + \bf{e_j}$, where $\bf{e_i}$ is the characteristic vector on the $ith$ component, and similarly for $\bf{e_j}$.

	We show that (\ref{eq:MIMO1}) is equivalent to (\ref{eq:MIMO-conditions}).
	First notice that for a vector $z=(z_1,\ldots,z_k)$ the function $h$ can be written as $h(z)=\sum_{\ell,m=1}^k a_{\ell m} z_\ell z_m$. Then, using that $z^1_\ell = z^0_\ell$ for all $\ell \neq i$ and $z^1_i = z^0_i + 1$, we have
	\begin{equation*}
	h(z^1) - h(z^0) = \sum_{\ell,m=1}^k a_{\ell m} z^1_\ell z^1_m - \sum_{\ell,m=1}^k a_{\ell m} z^0_\ell z^0_m = \sum_{\ell=1}^k a_{\ell i} z^0_\ell + \sum_{m=1}^k a_{im} z^0_m + a_{ii}.
	\end{equation*}
	Similarly, using that $z^3_\ell = z^2_\ell$ for all $\ell \neq i$ and $z^3_i = z^2_i + 1$, we have
	\begin{equation*}
	h(z^3) - h(z^2) = \sum_{\ell=1}^k a_{\ell i} z^2_\ell + \sum_{m=1}^k a_{im} z^2_m + a_{ii}.
	\end{equation*}
	Thus, using that $z^2 = z^0 + \bf{e_j}$ we get
	\begin{align*}
	& h(z^1) - h(z^0) \geq h(z^3) - h(z^2) \\
	\iff & \sum_{\ell=1}^k a_{\ell i} z^0_\ell + \sum_{m=1}^k a_{im} z^0_m + a_{ii} \;\geq \; \sum_{\ell=1}^k a_{\ell i} z^2_\ell + \sum_{m=1}^k a_{im} z^2_m + a_{ii} \\
	\iff & \sum_{\ell=1}^k a_{\ell i}(z^0_\ell - z^2_\ell) + \sum_{m=1}^k a_{im}(z^0_m - z^2_m)  \geq 0 \\
	\iff & - a_{ji} - a_{ij} \geq 0 \\
	\iff & a_{ji} + a_{ij} \leq 0.
	\end{align*}
\end{proof}

\section{Upwards-closed (aka blocking) families}
\label{sec:blocking}

In this section, we give some background for blocking families.
As our work for minimization is restricted to  monotone functions, we can often convert an arbitrary
set family into its upwards-closure (i.e., a blocking version of it) and work with it instead. We discuss this
reduction as well.
The technical details discussed in this section are fairly standard and we include them for completeness.


A set family $\F$ over a ground set $V$ is {\em upwards-closed}
if $F \subseteq F'$ and $F \in \F$, implies that $F' \in \F$; these are sometimes referred to as  {\em blocking families}.
Examples of such families include vertex covers or set covers more generally,  whereas spanning trees are not.

\subsection{Reducing to blocking families}

Now consider an arbitrary set family $\F$ over $V$. We may define its {\em upwards closure}
by $\F^{\uparrow}=\{F':  F \subseteq F' \textit{ for some $F \in \F$}\}$. In this section we argue that
in order to solve a monotone optimization problem over sets in $\F$ it is often sufficient to work
over its upwards-closure.


This requires two ingredients. First, we  need
a separation algorithm for the relaxation $P^*(\F)$, but indeed this is often available for many
natural families such as spanning trees, perfect matchings, $st$-paths, and vertex covers.   The second ingredient  needed is the ability to turn an integral solution $\chi^{F'}$ from  $P^*(\F^{\uparrow})$ or $P(\F^\uparrow)$
into an integral solution $\chi^F \in P(\F)$. We now argue that this is the case if a polytime separation algorithm is available for $P^*(\F^{\uparrow})$ or for the polytope $P(\F):= conv(\{\chi^F: \textit{$F \in \F$}\})$.

For a polyhedron $P$,  we denote its {\em dominant} by
$P^{\uparrow} := \{z:  z \geq x \textit{~for some~} x \in P \}$.
The following observation is straightforward.

\begin{claim}
	\label{claim:lattice-points}
	Let $H$ be the set of vertices of the hypercube in $\R^V$. Then
	$$H \cap P(\F^\uparrow) = H \cap P(\F)^\uparrow = H \cap P^*(\F^\uparrow).$$
	In particular we have that $\chi^S \in P(\F)^\uparrow \iff \chi^S \in P^*(\F^\uparrow)$.
\end{claim}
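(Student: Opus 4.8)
The plan is to prove the three-way equality by showing that each of the sets equals the common value $\{\chi^S : S \in \F^\uparrow\}$, i.e. the hypercube vertices indexed by sets in the upward closure. Since $H$ is exactly the collection of $\{0,1\}$-vectors $\chi^S$ (for $S \subseteq V$), it suffices to characterize, for each of the three polyhedra, precisely which $\chi^S$ it contains. For the two ``convex-type'' bodies $P(\F^\uparrow)$ and $P(\F)^\uparrow$ the unifying tool will be a single support argument: if a $\{0,1\}$-point dominates a convex combination $x=\sum_i \lambda_i \chi^{F_i}$ of $\{0,1\}$-generators (with $\lambda_i>0$) and $x\leq\chi^S$, then $x_v=0$ for every $v\notin S$, which forces $\sum_{i:\,v\in F_i}\lambda_i=0$ and hence $v\notin F_i$ for all such $i$; thus every generator satisfies $F_i\subseteq S$. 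The blocking body $P^*(\F^\uparrow)$ is instead handled directly through Lehman's characterization.

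First I would dispatch the two easiest cases. For $H\cap P^*(\F^\uparrow)$: recall from the text that $\B(\F^\uparrow)=\B(\F)$, so $\chi^S\in P^*(\F^\uparrow)$ means $\chi^S(B)\geq1$ for all $B\in\B(\F)$; for a $\{0,1\}$-vector this is literally $S\cap B\neq\emptyset$ for all $B\in\B(\F)$, which by Lehman's Claim~\ref{claim:blocker}(1) is equivalent to $S\in\F^\uparrow$. For $H\cap P(\F^\uparrow)$: writing $\chi^S=\sum_j \mu_j\chi^{F'_j}$ as a convex combination of generators $F'_j\in\F^\uparrow$, the support argument gives $v\notin F'_j$ for every $v\notin S$ (and every $j$ with $\mu_j>0$), while the coordinates $v\in S$ force $v\in F'_j$; hence each such $F'_j$ equals $S$, so $S\in\F^\uparrow$, the converse being immediate. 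Thus both outer terms equal $\{\chi^S : S\in\F^\uparrow\}$.

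The remaining work is the middle term $H\cap P(\F)^\uparrow$, where I would show $\chi^S\in P(\F)^\uparrow \iff S\in\F^\uparrow$. The backward direction is trivial: if some $F\in\F$ has $F\subseteq S$ then $\chi^F\leq\chi^S$ with $\chi^F\in P(\F)$, so $\chi^S$ dominates a point of $P(\F)$. The forward direction is the only place needing care, and is where I expect the main (though mild) obstacle: here one must apply the support argument to the \emph{dominated} fractional point $x$ rather than to $\chi^S$ itself. Given $x\in P(\F)$ with $x\leq\chi^S$, write $x=\sum_i\lambda_i\chi^{F_i}$ over generators $F_i\in\F$ with $\lambda_i>0$; for each $v\notin S$ we get $x_v\leq\chi^S_v=0$, so no $F_i$ contains $v$, whence every $F_i\subseteq S$ and any single $F_i$ witnesses $S\in\F^\uparrow$. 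Combining the three characterizations yields the displayed equalities, and the concluding equivalence $\chi^S\in P(\F)^\uparrow \iff \chi^S\in P^*(\F^\uparrow)$ is just the equality of the middle and last terms.
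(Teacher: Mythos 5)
Your proof is correct, and all three characterizations (via Lehman's claim for $P^*(\F^\uparrow)$ and via the support argument for the two convex bodies) check out. The paper itself offers no proof of this claim — it is dismissed as a "straightforward" observation — so there is nothing to compare against; your argument is the natural one and fills the gap cleanly.
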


We can now use this observation to prove the following.

\begin{lemma}
	\label{lem:dominant-reduction}
	Assume we have a separation algorithm for $P^*(\F^\uparrow)$. Then for any $\chi^{S} \in P^*(\F^\uparrow)$ we can find in polytime $\chi^{M} \in P(\F)$ such that $\chi^{M} \leq \chi^{S}$.
\end{lemma}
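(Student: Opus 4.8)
The plan is to produce $M$ by greedily peeling elements away from $S$, using the given separation algorithm purely as a membership test. First I would record the starting point: since $\chi^{S} \in P^*(\F^\uparrow)$, evaluating the defining inequalities on a $0/1$ vector shows $S \cap B \neq \emptyset$ for every $B \in \B(\F)$, so by Claim~\ref{claim:blocker} we have $S \in \F^\uparrow$. In particular some member of $\F$ is contained in $S$, and the task is to extract one constructively.

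The core routine maintains a current set $M$, initialized to $M := S$, and scans the elements $v_1,\ldots,v_n$ of $V$ in an arbitrary fixed order. At step $j$, if $v_j \in M$ I would ask the separation oracle whether $\chi^{M - v_j} \in P^*(\F^\uparrow)$; if the point lies in the polytope I set $M := M - v_j$, and otherwise I retain $v_j$. The oracle acts as a membership test here, and for a $0/1$ vector $\chi^{T}$ membership in $P^*(\F^\uparrow)$ is equivalent to $T \in \F^\uparrow$ (again Claim~\ref{claim:blocker}), so each step correctly decides whether dropping $v_j$ keeps us inside $\F^\uparrow$. The loop performs at most $|V|$ oracle calls, so the procedure runs in polynomial time, and by construction the invariants $M \subseteq S$ and $M \in \F^\uparrow$ are maintained throughout.

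The key structural point is that the output $M$ is a minimal member of $\F^\uparrow$, and that such minimal members lie in $\F$. For minimality, suppose $v_j$ survives to the end, i.e.\ $v_j \in M$. When $v_j$ was processed, the then-current set $M_j \supseteq M$ satisfied $M_j - v_j \notin \F^\uparrow$ (this is exactly why $v_j$ was kept). Since $M - v_j \subseteq M_j - v_j$ and $\F^\uparrow$ is upward-closed, subsets of non-members are non-members, hence $M - v_j \notin \F^\uparrow$; thus no element of $M$ can be removed and $M$ is minimal. Finally, if $M$ is minimal in $\F^\uparrow$, choose $F \in \F$ with $F \subseteq M$ (such $F$ exists because $M \in \F^\uparrow$); were $F \subsetneq M$ we would contradict minimality since $F \in \F \subseteq \F^\uparrow$, so $F = M$ and $M \in \F$. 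Therefore $\chi^{M} \in P(\F)$ (the $0/1$ points of $P(\F)=\mathrm{conv}\{\chi^{F} : F \in \F\}$ are precisely the $\chi^{F}$ with $F \in \F$) and $\chi^{M} \leq \chi^{S}$, as required.

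I expect the only genuine subtlety -- the step worth stating carefully -- to be the minimality argument: because elements are tested against the shrinking set $M_j$ rather than against the final $M$, soundness relies on the contrapositive of upward-closedness, namely that non-membership in $\F^\uparrow$ is inherited by subsets. The remaining ingredients (membership decided via the separation oracle, and minimal elements of $\F^\uparrow$ lying in $\F$) are routine.
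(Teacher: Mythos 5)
Your proof is correct and uses the same algorithm as the paper: greedily delete elements of $S$ one at a time, using the separation oracle as a $0/1$ membership test for $\F^\uparrow$. The only difference is cosmetic: you certify correctness by showing the output is a minimal member of $\F^\uparrow$ (via the contrapositive of upward-closure) and that minimal members of $\F^\uparrow$ lie in $\F$, whereas the paper reaches the same conclusion by decomposing a dominated point of $P(\F)^\uparrow$ as a convex combination of indicators $\chi^{U_i}$ with $U_i \in \F$ and deriving a contradiction with the algorithm's behavior; your version is a bit more elementary but substantively identical.
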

\begin{proof}
	Let $S=\{1,2,\ldots,k\}$. We run the following routine until no more elements can be removed:
	\vspace*{10pt}
	
	For $i \in S$\\
	\hspace*{20pt} If $\chi^{S-i} \in P^*(\F^\uparrow)$ then $S=S-i$
	
	\vspace*{10pt}
	Let $\chi^M$ be the output. We show that $\chi^M \in P(\F)$. Since $\chi^M \in P^*(\F^\uparrow)$, by Claim \ref{claim:lattice-points} we know that $\chi^M \in P(\F)^\uparrow$. Then by definition of dominant there exists $x\in P(\F)$ such that $x\leq \chi^M \in P(\F)^\uparrow$. It follows that the vector $x$ can be written as $x = \sum_{i} \lambda_{i} \chi^{U_i}$ for some $U_i \in \F$ and $\lambda_i \in (0,1]$ with $\sum_i \lambda_i =1$. Clearly we must have that $U_i \subseteq M$ for all $i$, otherwise $x$ would have a non-zero component outside $M$.  In addition, if for some $i$ we have $U_i \subsetneq M$, then there must exist some $j \in M$ such that $U_i \subseteq M-j \subsetneq M$. Hence $M-j \in \F^{\uparrow}$, and thus $\chi^{M-j} \in P(\F)^\uparrow$ and $\chi^{M-j} \in P^*(\F^\uparrow)$. But then when component $j$ was considered in the algorithm above, we would have had $S$ such that $M \subseteq S$ and so $\chi^{S-j} \in P^*(\F^\uparrow)$ (that is $\chi^{S-j} \in P(\F)^\uparrow$), and so $j$ should have been removed from $S$, contradiction.
\end{proof}

We point out that for many natural set families $\F$ we can work with the relaxation $P^*(\F^\uparrow)$ assuming that it admits a separation algorithm.
Then, if we have an algorithm which produces  $\chi^{F'} \in P^*(\F^\uparrow)$ satisfying some approximation guarantee for a monotone problem, we can use Lemma \ref{lem:dominant-reduction} to construct in polytime $F \in \F$ which obeys the same guarantee.

Moreover, notice that for Lemma \ref{lem:dominant-reduction} to work we do not need an actual separation oracle for $P^*(\F^\uparrow)$, but rather all we need is to be able to separate over $0-1$ vectors only. Hence, since the polyhedra $P^*(\F^\uparrow), \, P(\F^\uparrow)$ and $P(\F)^\uparrow$ have the same $0-1$ vectors (see Claim \ref{claim:lattice-points}), a separation oracle for either $P(\F^\uparrow)$ or $P(\F)^\uparrow$ would be enough for the routine of Lemma \ref{lem:dominant-reduction} to work. We now show that this is the case if we have a polytime separation oracle for $P(\F)$. The following result shows that if we can separate efficiently over $P(\F)$ then we can also separate efficiently over the dominant $P(\F)^\uparrow$.

\begin{claim}
	If we can separate over a polyhedron $P$ in polytime,  then we can also separate over its dominant $P^\uparrow$ in polytime.
\end{claim}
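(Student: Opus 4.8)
The plan is to reduce separation over the dominant $P^\uparrow$ to separation over $P$ itself, using the observation that $y \in P^\uparrow$ precisely when $P$ meets the ``box'' $\{x : x \leq y\}$. First I would record that, by definition of the dominant, $y \in P^\uparrow$ if and only if there is some $x \in P$ with $x \leq y$; equivalently, the polyhedron $Q := P \cap \{x : x \leq y\}$ is nonempty. The key point is that a separation oracle for $Q$ is immediate from the one for $P$: given a candidate $x'$, I first call the oracle for $P$, and if it reports $x' \notin P$ I return the hyperplane it produces (which is valid for $P$ and hence for $Q$); otherwise I check the inequalities $x'_j \leq y_j$ coordinate by coordinate, returning $x_j \leq y_j$ for the first violated index and declaring $x' \in Q$ if none is violated.

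Next I would invoke the polynomial equivalence of separation and optimization of Gr\"{o}tschel, Lov\'asz, and Schrijver: from a polytime separation oracle for $Q$, and under the usual technical assumptions on the encoding of $P$ (rational, well-described data with a bound on vertex complexity), one can test in polytime whether $Q \neq \emptyset$, and thereby decide membership $y \in P^\uparrow$. If $Q \neq \emptyset$ the procedure also returns a witness point $x^* \in Q$, which directly certifies $y \in P^\uparrow$.

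The hard part will be producing an actual separating hyperplane when $y \notin P^\uparrow$, since we never have an explicit inequality description of $P$ in hand. Here I would combine Farkas duality with the constructive nature of the ellipsoid run. When $Q = \emptyset$, infeasibility of $\{x \in P,\ x \leq y\}$ yields nonnegative multipliers combining the box constraints $x \leq y$ with valid inequalities of $P$ into a single inequality $a^T z \geq \gamma$ with $a \geq 0$ that is valid for all of $P$, hence for $P^\uparrow$ (because for $z \geq x \in P$ and $a \geq 0$ we get $a^T z \geq a^T x \geq \gamma$), yet violated by $y$, i.e.\ $a^T y < \gamma$. Concretely, the finitely many hyperplanes returned by the $P$-oracle during the failed feasibility run, together with the nonnegative combination that certifies emptiness, assemble into exactly such an $a$ and $\gamma$, and $\{z : a^T z = \gamma\}$ is the desired hyperplane separating $y$ from $P^\uparrow$.

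I expect everything except this last extraction to be routine once the separation--optimization equivalence is available; the only delicate bookkeeping is marshalling the Farkas certificate out of the ellipsoid method (and stating the well-described-polyhedron hypotheses that this silently requires), so that is where I would concentrate the care in a full write-up.
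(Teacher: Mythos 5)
Your proposal is correct and follows essentially the same route as the paper: both reduce membership of $y$ in $P^\uparrow$ to feasibility of the auxiliary system $\{x \in P,\ x \leq y\}$ (the paper writes it with a slack variable as $x+s=y$, $x\in P$, $s\geq 0$), over which one can separate using the oracle for $P$ plus the trivial box/nonnegativity constraints, and then invoke the separation--optimization equivalence. The only difference is that you explicitly carry out the Farkas extraction of a separating inequality $a^Tz\geq\gamma$ with $a\geq 0$ when $y\notin P^\uparrow$ --- a step the paper's two-line proof leaves implicit --- and your treatment of it (including the observation that $a\geq 0$ makes the inequality valid for the whole dominant) is sound under the usual well-described-polyhedron hypotheses you flag.
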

\begin{proof}
	Given a vector $y$, we can decide whether $y \in P^\uparrow$ by solving
	\begin{align*}
	x + s = y \\
	x \in P \\
	s \geq 0.
	\end{align*}
	Since can we easily separate over the first and third constraints, and a separation oracle for $P$ is given (i.e. we can also separate over the set of constraints imposed by the second line), it follows that we can separate over the above set of constraints in polytime.
\end{proof}

Now we can apply the same mechanism from Lemma \ref{lem:dominant-reduction} to turn feasible sets from $\F^{\uparrow}$ into feasible sets in $\F$.

\begin{corollary}
	\label{cor:dominant-reduction2}
	Assume we have a separation algorithm for $P(\F)^\uparrow$. Then for any $\chi^{S} \in P(\F)^\uparrow$ we can find in polytime $\chi^{M} \in P(\F)$ such that $\chi^{M} \leq \chi^{S}$.
\end{corollary}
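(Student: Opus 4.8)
The plan is to reuse, essentially verbatim, the greedy element-removal routine from Lemma~\ref{lem:dominant-reduction}. Starting from the given $S$ with $\chi^S \in P(\F)^\uparrow$, I would process the elements $i \in S$ one at a time, deleting $i$ from $S$ whenever $\chi^{S-i}$ still lies in $P(\F)^\uparrow$, and stop when no further element can be removed. Each such test is a membership query for a $0$-$1$ vector, which the assumed separation algorithm for $P(\F)^\uparrow$ resolves in polytime; since $|S| \leq |V|$, the whole routine runs in polynomial time. Let $\chi^M$ denote the output, so that $\chi^M \leq \chi^S$ and $M$ is inclusion-minimal subject to $\chi^M \in P(\F)^\uparrow$.

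The crux is simply to observe that this routine only ever queries the oracle on characteristic vectors, i.e. on $0$-$1$ points of the hypercube. By Claim~\ref{claim:lattice-points} the polyhedra $P^*(\F^\uparrow)$ and $P(\F)^\uparrow$ share exactly the same $0$-$1$ vectors, so on every query the oracle for $P(\F)^\uparrow$ returns the same answer that an oracle for $P^*(\F^\uparrow)$ would. Hence the execution is identical to that of the routine in Lemma~\ref{lem:dominant-reduction}, and its correctness analysis carries over unchanged: since $\chi^M \in P(\F)^\uparrow$, by definition of dominant there is some $x \in P(\F)$ with $x \leq \chi^M$, and writing $x = \sum_i \lambda_i \chi^{U_i}$ with $U_i \in \F$, $\lambda_i \in (0,1]$ and $\sum_i \lambda_i = 1$, each $U_i \subseteq M$. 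If some $U_i \subsetneq M$, there is $j \in M$ with $U_i \subseteq M-j$, whence $\chi^{M-j} \in P(\F)^\uparrow$; but then (as $M$ was contained in $S$ throughout and the dominant is upward closed) element $j$ would have been deleted by the routine, contradicting minimality. So every $U_i = M$ and $\chi^M = x \in P(\F)$, as required.

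I expect no real obstacle here: all the substantive work was already done in Claim~\ref{claim:lattice-points} (coincidence of the $0$-$1$ points of the three polyhedra) and in Lemma~\ref{lem:dominant-reduction} (the removal routine and its analysis). The only point requiring care is that the routine's queries stay confined to $0$-$1$ vectors, which is precisely what allows a $P(\F)^\uparrow$-oracle to stand in for a $P^*(\F^\uparrow)$-oracle; this is immediate from the form of the routine. I would also note that, combined with the preceding claim (separation over $P$ yields separation over $P^\uparrow$), the hypothesis of this corollary is met whenever $P(\F)$ itself admits a polytime separation oracle, which is the intended use case.
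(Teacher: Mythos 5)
Your proof is correct and follows exactly the paper's route: the paper likewise obtains this corollary by rerunning the removal routine of Lemma~\ref{lem:dominant-reduction}, using the fact (Claim~\ref{claim:lattice-points}) that $\chi^{S}\in P(\F)^{\uparrow}\iff\chi^{S}\in P^{*}(\F^{\uparrow})$ so that all queries, being on $0$--$1$ vectors, can be answered by the $P(\F)^{\uparrow}$ oracle. Your added remark that the routine's queries are confined to the hypercube is precisely the observation the paper relies on, so nothing is missing.
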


We conclude this section by making the remark that if we have an algorithm which produces  $\chi^{F'} \in P(\F^\uparrow)$ satisfying some approximation guarantee for a monotone problem, we can use Corollary \ref{cor:dominant-reduction2} to construct $F \in \F$ which obeys the same guarantee.

\end{document}